\journal{}
\newtheorem{theorem}{Theorem}[section]
\newtheorem{prop}[theorem]{Proposition}
\newtheorem{cor}[theorem]{Corollary}
\newtheorem{lemma}[theorem]{Lemma}
\newtheorem{remark}[theorem]{Remark}
\newtheorem{define}[theorem]{Definition}
\newtheorem{example}[theorem]{Example}
\newenvironment{thm31}
{{\vspace*{10pt}} \par \noindent \textbf{Theorem 4.3 (Correctness
Theorem).}~\itshape} {\par {\vspace*{10pt}}}
\newenvironment{prop36}
{{\vspace*{10pt}} \par \noindent \textbf{Proposition 4.6 (first
kind).}~\itshape} {\par {\vspace*{10pt}}}
\newenvironment{prop38}
{{\vspace*{10pt}} \par \noindent \textbf{Proposition 4.9
(comparable).}~\itshape} {\par {\vspace*{10pt}}}
\newenvironment{prop39}
{{\vspace*{10pt}} \par \noindent \textbf{Proposition 4.10
(rewritable).}~\itshape} {\par {\vspace*{10pt}}}
\newenvironment{lm310}
{{\vspace*{10pt}} \par \noindent \textbf{Lemma 4.11 (key
lemma).}~\itshape} {\par {\vspace*{10pt}}}
\newenvironment{prop311}
{{\vspace*{10pt}} \par \noindent \textbf{Proposition 4.12
(left).}~\itshape} {\par {\vspace*{10pt}}}
\newenvironment{prop312}
{{\vspace*{10pt}} \par \noindent \textbf{Proposition 4.13
(right).}~\itshape} {\par {\vspace*{10pt}}}
\newcommand{\Input}[1]
  {\noindent\begin{tabular}{@{}p{1.8cm}@{}p{13.2cm}@{}}
   {\bf Input: }&#1 \end{tabular}}
\newcommand{\Output}[1]
  {\noindent\begin{tabular}{@{}p{1.8cm}@{}p{13.2cm}@{}}
   {\bf Output: }&#1 \end{tabular}}
\def\F{{\mathbb{F}}}
\def\N{{\mathbb{N}}}
\def\K{{\rm K}}
\def\m{{\rm m}}
\def\lm{{\rm lm}}
\def\lpp{{\rm lpp}}
\def\lc{{\rm lc}}
\def\fb{{\bf f}}
\def\eb{{\bf e}}
\def\gb{{\bf g}}
\def\hb{{\bf h}}
\def\lcm{{\rm lcm}}
\def\gcd{{\rm gcd}}
\def\deg{{\rm deg}}
\def\max{{\rm max}}
\def\sp{{\rm spoly}}
\def\Q{{\mathbb{Q}}}
\def\gr{{Gr\"obner }}
\def\x{{x_1,\cdots,x_n}}
\def\sus{{\succ\,}}
\def\prl{{\lhd\,}}
\def\sul{{\rhd\,}}
\def\prc{{\lhd\,}}
\def\sull{{\rhd'\,}}
\def\s{{\rm{Sign}}}
\def\p{{\rm{Poly}}}
\def\n{{\rm{Num}}}
\def\lif{{\bf if \,}}
\def\lthen{{\bf then \,}}
\def\lelse{{\bf else \,}}
\def\lendif{{\bf end if\,}}
\def\lwhile{{\bf while \,}}
\def\lendwhile{{\bf end while\,}}
\def\ldo{{\bf do \,}}
\def\lend{{\bf end \,}}
\def\lreturn{{\bf return \,}}
\def\lla{{\longleftarrow}}
\def\lbegin{{\bf begin}}
\newcommand{\SPC}{\hspace*{15pt}}
\newcommand{\ignore}[1]{}
\begin{document}

\begin{frontmatter}

%% Title, authors and addresses

%% use the tnoteref command within \title for footnotes;
%% use the tnotetext command for the associated footnote;
%% use the fnref command within \author or \address for footnotes;
%% use the fntext command for the associated footnote;
%% use the corref command within \author for corresponding author footnotes;
%% use the cortext command for the associated footnote;
%% use the ead command for the email address,
%% and the form \ead[url] for the home page:
%%
%% \title{Title\tnoteref{label1}}
%% \tnotetext[label1]{}
%% \author{Name\corref{cor1}\fnref{label2}}
%% \ead{email address}
%% \ead[url]{home page}
%% \fntext[label2]{}
%% \cortext[cor1]{}
%% \address{Address\fnref{label3}}
%% \fntext[label3]{}

\title{A New Proof for the Correctness of F5 (F5-Like) Algorithm}

%% use optional labels to link authors explicitly to addresses:
%% \author[label1,label2]{<author name>}
%% \address[label1]{<address>}
%% \address[label2]{<address>}

\author{Yao Sun and Dingkang Wang\fnref{label1}}

\fntext[label1]{The authors are supported by NSFC 10971217,
10771206 60821002/F02.}

\address{Key Laboratory of Mathematics Mechanization, Academy of Mathematics and Systems Science, CAS, Beijing 100190,  China}

\ead{sunyao@amss.ac.cn, dwang@mmrc.iss.ac.cn}

\begin{abstract}
%% Text of abstract
The famous F5 algorithm for computing \gr basis was presented by
Faug\`ere in 2002 without complete proofs for its correctness. The
current authors have simplified the original F5 algorithm into an F5
algorithm in Buchberger's style (F5B algorithm), which is equivalent
to original F5 algorithm and may deduce some F5-like versions. In
this paper, the F5B algorithm is briefly revisited and a new
complete proof for the correctness of F5B algorithm is proposed.
This new proof is not limited to homogeneous systems and does not
depend on the strategy of selecting critical pairs (i.e. the
strategy deciding which critical pair is computed first) such that
any strategy could be utilized in F5B (F5) algorithm. From this new
proof, we find that the special reduction procedure (F5-reduction)
is the key of F5 algorithm, so maintaining this special reduction,
various variation algorithms become available. A natural variation
of F5 algorithm, which transforms original F5 algorithm to a
non-incremental algorithm, is presented and proved in this paper as
well. This natural variation has been implemented over the Boolean
ring. The two revised criteria in this natural variation are also
able to reject almost all unnecessary computations and few
polynomials reduce to 0 in most examples.
\end{abstract}

\begin{keyword}
%% keywords here, in the form: keyword \sep keyword
\gr basis, F5 algorithm, proof of correctness, variation algorithm
%% MSC codes here, in the form: \MSC code \sep code
%% or \MSC[2008] code \sep code (2000 is the default)

\end{keyword}

\end{frontmatter}

% \linenumbers

%% main text

\section{Introduction} \label{sec-introduction}

Solving systems of polynomial equations is a basic problem in
computer algebra, through which many practical problems can be
solved easily. Among all the methods for solving polynomial systems,
the \gr basis method is one of the most efficient approaches. After
the conception of \gr basis is proposed in 1965
\citep{Buchberger65}, many algorithms have been presented for
computing \gr basis, including \citep{Lazard83, GebMol86, Gio91,
Mora92, Fau99, Fau02}. Currently, F5 algorithm is one of the most
efficient algorithms.

After the F5 algorithm is proposed, many researches have been done.
For example, Bardet et al. study the complexity of this algorithm in
\citep{Bardet03}. Faug\`ere and Ars use the F5 algorithm to attack
multivariable systems in \citep{Fau03}. Stegers revisits F5
algorithm in his master thesis \citep{Stegers05}. Eder discusses the
two criteria of F5 algorithm in \citep{Eder08} and proposes a
variation of F5 algorithm \citep{Eder09}. Ars and Hashemi present
two variation of criteria in \citep{Ars09}. Recently, Gao et al.
give a new incremental algorithm in \citep{Gao09}. The current
authors discuss the F5 algorithm over boolean ring and present a
branch F5 algorithm in \citep{SunWang09a, SunWang09b}. We also
discuss the F5 algorithm in Buchberger's style in \citep{SunWang10}.

Currently, available proofs for the correctness of F5 algorithm can
be found from \citep{Fau02, Stegers05, Eder08, Eder09}. However,
these proofs are somewhat not complete, particularly for
non-homogeneous systems.

The main purpose of current paper is to present a new complete proof
for the correctness of F5 (F5-like) algorithm. As we have shown in
\citep{SunWang10} that the F5 algorithm in Buchberger's style (F5B
algorithm) is equivalent to the original F5 algorithm in
\citep{Fau02} and may deduces various F5-like algorithms, therefore,
we will focus on proving the correctness of F5B algorithm in this
paper. The proposed new proof is {\bf not} limited to homogeneous
systems and does {\bf not} depend on the strategy of selecting
critical pairs (s-pairs), so the correctness of all versions of F5
algorithm mentioned in \citep{SunWang10} can be proved at the same
time. After a slight modification, the correctness of the variation
of F5 algorithm in \citep{Ars09}, which is quite similar as the
natural variation in this paper, can also be proved.

Meanwhile, according to the new proposed proof, we find that the key
of F5 (F5-like) algorithm is the special reduction procedure, which
ensures the correctness of both criteria in F5 algorithm. Thus,
maintaining this special reduction procedure, many variations of F5
algorithm become available. We propose and prove a natural variation
of F5 algorithm after the main proofs. This variation algorithm
avoids computing \gr basis incrementally such that the \gr bases for
subsets of input polynomials are not necessarily computed. Besides,
the two revised criteria in this variation are also able to reject
almost all unnecessary reductions as shown in the experimental data.

\ignore{ Briefly, F5 algorithm introduces a special reduction
(F5-reduction) and provides two new criteria (syzygy and rewritten
criterion) to avoid unnecessary reductions.

The innovation of F5 algorithm is the introduction of signatures.
For each polynomial, a signature is assigned. Both the polynomial
and its signature are included in the labeled polynomial, which is
the basic data structure in F5 algorithm instead of general
polynomial. The main role of a polynomial's signature is to record
the origin of this polynomial, i.e. from which initial polynomial
it reduces.

As labeled polynomials are used, there are two kinds of important
information through out F5 algorithm: signatures and leading power
products. Both criteria of F5 algorithm are mainly described by
signatures.  The reduction of F5 algorithm needs both signatures
and leading power products. So the key point of the proofs is how
to merge these two kinds of information, and the key lemma plays
this role. The key lemma bases on the important properties
provided by the special reduction of F5 algorithm. Thus, the
F5-reduction is the essence of the whole algorithm, and
maintaining F5-reduction, the variations of F5 algorithm, such as
changing the order of signatures, can still work correctly.

According to the proof of this paper, we find an interesting
phenomenon in F5 algorithm. That is, for most strategies of
selecting critical pairs, the critical pairs detected by two
criteria may not be redundant/useless when they are being
detected, but they will become redundant/useless after the
algorithm terminates. This phenomenon is illustrated by a toy
example and becomes a big thorny problem to the proof of
correctness. A trick is introduced in this new proof and resolves
this problem subtly. However, this phenomenon is not dealt with
appropriately in other papers.}

This paper is organized as follows. We revisit the F5 algorithm in
Buchberger's style (F5B algorithm) in Section 3 after introducing
basic notations in Section 2. The complete proof for the correctness
of F5B algorithm is presented in Section 4. The key of F5 algorithm
and the natural variation algorithm are discussed in Section 5. This
paper is concluded in Section 6.

\section{Basic Notations}

Let $\K$ be a field and $\K[X]=\K[\x]$ a polynomial ring with
coefficients in $\K$. Let $\N$ be the set of non-negative integers
and $PP(X)$ the set of power products of $\{x_1,\cdots,x_n\}$,
i.e. $PP(X):=\{x^\alpha  \mid x^\alpha=x_1^{\alpha_1}\cdots
x_n^{\alpha_n}, \alpha_i \in \N, i=1,\cdots,n\}$.

Let $\prec$ be an admissible order defined over $PP(X)$. Given
$t=x^{\alpha} \in PP(X)$, the degree of $t$ is defined as $\deg(t):=
|{\alpha}| =\sum_{i=1}^n\alpha_i$. For a polynomial $0\not=f\in
\K[x_1,\cdots,x_n]$, we have $f=\sum c_{\alpha}x^{\alpha}$. The
degree of $f$ is defined as $\deg(f):=\max\{| \alpha |,
c_{\alpha}\not=0\}$ and the leading power product of $f$ is
$\lpp(f):=\max_\prec\{x^{\alpha}, c_{\alpha}\not=0\}$. If
$\lpp(f)=x^{\alpha}$, then the leading coefficient and leading
monomial of $f$ are defined to be $\lc(f):=c_{\alpha}$ and
$\lm(f):=c_{\alpha}x^{\alpha}$ respectively.

\section{The F5 Algorithm in Buchberger's Style} \label{sec-F5B}

In brief, F5 algorithm introduces a special reduction
(F5-reduction) and provides two new criteria (Syzygy Criterion
\footnote{Also called F5 Criterion in some papers. To avoid
confusion with the name of F5 algorithm, we call it Syzygy
Criterion in this paper.} and Rewritten Criterion) to avoid
unnecessary reductions.

In this section, we give the definitions of signatures and labeled
polynomials first, and then describe the Syzygy Criterion and
Rewritten Criterion as well as the special reduction, F5-reduction.
At last, we present the F5 algorithm in Buchberger's Style (F5B
algorithm) as discussed in \citep{SunWang10}.

As a preparation for the main proofs, an important auxiliary concept
is introduced. That is the {\em numbers} of labeled polynomials,
which reflect the order of when labeled polynomials are generated.
This auxiliary concept simplifies the description of the Rewritten
Criterion and benefits for the main proofs. For more details about
the F5B algorithm, please see \citep{SunWang10}.

\subsection{Signature and Labeled Polynomial} \label{subsec-signature}

Consider a polynomial system $\{f_1, \cdots, f_m\}\subset \K[X]$ and
denote $(f_1, \cdots, f_m)$ to be a polynomial $m$-tuple in
$(\K[X])^m$. We call the $f_i$'s initial polynomials, as they are
initial generators of ideal $\langle f_1, \cdots, f_m \rangle\subset
\K[X]$.

Let $\eb_i$ be the canonical $i$-th unit vector
%\footnote{We use the notation $\eb_i$ instead of the general unit vector $\eb_i$ in
%order to emphasize the relation between $\eb_i$ and $f_i$.}
in $(\K[X])^m$, i.e. the $i$-th element of $\eb_i$ is $1$, while
the others are $0$ . Consider the homomorphism map $\sigma$ over
the free module $(\K[X])^m$:
$$\sigma: (\K[X])^m \longrightarrow \langle f_1, \cdots, f_m
\rangle,$$
$$(g_1, \cdots, g_m) \longmapsto g_1f_1+\cdots+g_mf_m.$$ Then $\sigma(\eb_i)=f_i$. More
generally, if $\gb=g_1\eb_1+\cdots+ g_m\eb_m$, where $g_i\in
\K[X]$ for $1\le i\le m$, then $\sigma(\gb)=g_1f_1+\cdots+g_mf_m$.

The admissible order $\prec$ on $PP(X)$ extends to the free module
$(\K[X])^m$ naturally in a POT (position over term)
fashion\footnote{This order of signatures is imported from
\citep{Fau02}. We will introduce another order of signatures to
deduce the natural variation of F5 algorithm after the main
proofs.}:
$$x^\alpha\eb_i \prec x^\beta\eb_j \mbox{ (or } x^\beta\eb_j \succ x^\alpha\eb_i)\ \ \mbox{  iff  }
\left\{\begin{array}{l} i > j, \\ \mbox{ or } \\ i = j \mbox{ and
} x^\alpha \prec x^\beta.
\end{array}\right.
$$ Thus we have $\eb_m \prec \eb_{m-1} \prec \cdots \prec
\eb_1$.

With the admissible order on $(\K[X])^m$, we can define the leading
power product, leading coefficient and leading monomial of a
$m$-tuple vector $\gb\in (\K[X])^m$ in a similarly way. For example,
let $\gb=(2x^2+y^2, 3xy)\in (\Q[x, y])^2$ or equivalently
$\gb=(2x^2+y^2)\eb_1+3xy\eb_2$. According to the Lex order $\prec$
on $PP(x, y)$ where $x \succ y$, we have $\lpp(\gb)=x^2\eb_1$,
$\lc(\gb)=2$ and $\lm(\gb)=2x^2\eb_1$.

Now, we give the mathematical definition of signatures.

\begin{define}[signature] \label{df-signature}
Let $\gb\in (\K[X])^m$ be an $m$-tuple vector. If polynomial
$g=\sigma(\gb)\in \langle f_1, \cdots, f_m\rangle \subset \K[X]$,
then the leading power product $\lpp(\gb)$ is defined to be a
signature of $g$.
\end{define}

Consider a simple system $\{f_1=x^2+2y, f_2=xy-z\} \subset \Q[x, y]$
with the Graded Reverse Lex Order ($x\succ y$). The s-polynomial of
$f_1$ and $f_2$ is $yf_1-xf_2=2y^2+xz$. According to the above
definition, $y\eb_1$ is a signature of the polynomial $2y^2+xz$, as
$\sigma(y\eb_1-x\eb_2)=2y^2+xz$ and $\lpp(y\eb_1-x\eb_2)=y\eb_1$.

Now we are able to assign a signature to each polynomial $g\in
\langle f_1, \cdots, f_m\rangle$. To tighten up the relation
between a polynomial and its signature, we integrate them together
and call it {\em labeled polynomial}.

\begin{define}[labeled polynomial]
Let $g\in \langle f_1, \cdots, f_m\rangle$ be a polynomial. If
$x^\alpha\eb_i$ is a signature of $g$, then ${\mathcal
G}=(x^\alpha\eb_i, g, k)$ is defined to be a labeled polynomial of
$g$, where $k\in \N$ reflects the order of when this labeled
polynomial is generated.

For convenience, denote the signature, polynomial and number of
the labeled polynomial ${\mathcal G}$ by $\s({\mathcal
G}):=x^\alpha\eb_i$, $\p({\mathcal G}):=g$ and $\n({\mathcal
G}):=k$. Besides, the leading power product and leading monomial
of ${\mathcal G}$ are denoted as: $\lpp({\mathcal G}):=\lpp(g)$
and $\lm({\mathcal G}):=\lm(g)$ respectively.
\end{define}

The {\em number} of labeled polynomial is an important auxiliary
concept for the main proofs. It is designated by the algorithm and
reflects the order of when the labeled polynomials are generated.
The meaning of {\em number} will be much clearer after the F5B
algorithm is presented.

Remark that a polynomial in the ideal $\langle f_1, \cdots,
f_m\rangle$ may have several different signatures, but during the
computations, the {\em signature} and {\em number} of each
polynomial are uniquely determined by the algorithm.

Therefore, in the above simple example $\{f_1=x^2+y, f_2=xy-z\}
\subset \Q[x, y]$, the labeled polynomials corresponding to $f_1$
and $f_2$ are $(\eb_1, f_1, 1)$ and $(\eb_2, f_2, 2)$
respectively. For the s-polynomial $yf_1-xf_2=2y^2+xz$ of $f_1$
and $f_2$, its labeled polynomial is $(y\eb_1, 2y^2+xz,
3)$.\footnote{The {\em number} of this s-polynomial is designated
by the algorithm} Then the {\em signature, polynomial} and {\em
number} of ${\mathcal G}=(y\eb_1, 2y^2+xz, 3)$ are $\s({\mathcal
G})=y\eb_1$, $\p({\mathcal G})=2y^2+xz$ and $\n({\mathcal G})=3$
respectively. We also have $\lpp({\mathcal G})=y^2$ and
$\lm({\mathcal G})=2y^2$. Notice that the {\em numbers} of labeled
polynomials $(\eb_1, f_1, 1)$ and $(\eb_2, f_2, 2)$ are $1$ and
$2$, both of which are smaller than the {\em number} $\n({\mathcal
G})=3$. This indicates the labeled polynomial ${\mathcal
G}=(y\eb_1, 2y^2+xz, 3)$ is generated later than labeled
polynomials $(\eb_1, f_1, 1)$ and $(\eb_2, f_2, 2)$.

Now we introduce two notations about signatures and labeled
polynomials. Define $S(X)$ to be the set of signatures, i.e.
$S(X):=\{x^\alpha \eb_i \mid  x^\alpha\in PP(X), 1\le i\le
m\}\subset R^m$, and $L(X)$ to be the set of labeled polynomials,
i.e. $L(X):=\{(x^\alpha \eb_i, g, k_g) \mid x^\alpha\eb_i \in S(X)
\mbox{ is a signature of } g\in \K[X]\}$. In the rest of current
paper, we use the flourish, such as ${\mathcal F}, {\mathcal G},
{\mathcal H}$, to represent labeled polynomials, while the
lowercase, such as $f, g, h$, stand for polynomials in $\K[X]$. The
boldface, $\fb, \gb, \hb$, refer to the elements in free module
$(\K[X])^m$.

In F5 algorithm, labeled polynomials are the basic elements in
computation instead of polynomials in $\K[X]$.
%So with the definition of signature, labeled polynomials do arithmetics in a
%similar way.
Suppose $\fb, \gb \in (K[X])^m$ such that $\sigma(\fb)=f$ and
$\sigma(\gb)=g$. Then ${\mathcal F}=(\lpp(\fb), f, k_f), {\mathcal
G}=(\lpp(\gb), g, k_g)\in L[X]$ are labeled polynomials. Assume
$cx^\gamma$ is a non-zero monomial. Then
\begin{enumerate}

\item[$\bullet$] $cx^\gamma{\mathcal F}=(x^\gamma\lpp(\fb),
cx^\gamma f, k_f)$, as $\sigma(cx^\gamma\fb)=cx^\gamma f$.

\item[$\bullet$] ${\mathcal F}+{\mathcal
G}=(\max_{\prec}\{\lpp(\fb), \lpp(\gb)\}, f+g, k_{f, g})$, as
$\sigma(\fb+\gb)=f+g$, where $k_{f, g}=k_f$ or $k_g$ corresponding
to the maximal one of $\{\lpp(\fb), \lpp(\gb)\}$.

\end{enumerate}

Unlike polynomials in $\K[X]$, labeled polynomials in $L[X]$ can
compare in following way:
$$(x^\alpha\eb_i, f, k_f) \prl (x^\beta\eb_j, g, k_g)
\mbox{ (or }(x^\beta\eb_j, g, k_g) \sul (x^\alpha\eb_i, f, k_f))\
\
\mbox{  iff  } \left\{\begin{array}{l} x^\alpha\eb_i \prec x^\beta\eb_j, \\
\mbox{ or } \\ x^\alpha\eb_i = x^\beta\eb_j \mbox{ and } k_f >
k_g.
\end{array}\right.
$$
Particularly, denote $(x^\alpha\eb_i, f, k_f) \bowtie
(x^\beta\eb_j, g, k_g)$, if $x^\alpha\eb_i = x^\beta\eb_j$ and
$k_f=k_g$. Remark that in this case, the polynomial $f$ may not
equal to $g$.

In the simple example $\{f_1=x^2+y, f_2=xy-z\} \subset \Q[x, y]$.
We have $(\eb_2, f_2, 2) \prl (\eb_1, f_1, 1)$, since $\eb_2 \prec
\eb_1$. For the s-polynomial $yf_1-xf_2=2y^2+xz$ of $f_1$ and
$f_2$, its labeled polynomial is $(y\eb_1, 2y^2+xz, 3)$. Notice
that $y(\eb_1, f_1, 1)=(y\eb_1, yf_1, 1)$. So we also have
$(y\eb_1, 2y^2+xz, 3) \prl (y\eb_1, yf_1, 1)$ due to the {\em
numbers} of these two labeled polynomials.

The critical pair (s-pair) of labeled polynomials is defined in a
similar way as well. For labeled polynomials ${\mathcal F},
{\mathcal G}\in L[X]$, we say $[{\mathcal F}, {\mathcal G}]:=(u,
{\mathcal F}, v, {\mathcal G})$ is the {\em critical pair} of
${\mathcal F}$ and ${\mathcal G}$, if $u, v$ are monomials in $X$
such that $u\lm({\mathcal F})=v\lm({\mathcal
G})=\lcm(\lpp({\mathcal F}), \lpp({\mathcal G}))$ and $u{\mathcal
F} \sul v{\mathcal G}$. Besides, the {\em s-polynomial} of
$[{\mathcal F}, {\mathcal G}]=(u, {\mathcal F}, v, {\mathcal G})$
is denoted as $\sp({\mathcal F}, {\mathcal G})=u{\mathcal
F}-v{\mathcal G}$.

Remark that labeled polynomials in the critical pair $[{\mathcal F},
{\mathcal G}]=(u, {\mathcal F}, v, {\mathcal G})$ is ordered by
$u{\mathcal F} \sul v{\mathcal G}$. Moreover, critical pairs can
compare with each other in the following way:
$$(u, {\mathcal F}, v, {\mathcal G}) \prl (r, {\mathcal P}, t, {\mathcal Q})
\mbox{ (or }(r, {\mathcal P}, t, {\mathcal Q}) \sul (u, {\mathcal
F}, v, {\mathcal G}))\ \
\mbox{  iff  } \left\{\begin{array}{l} u{\mathcal F} \prl r{\mathcal P}, \\
\mbox{ or } \\ u{\mathcal F} \bowtie r{\mathcal P} \mbox{ and }
v{\mathcal G} \prl t{\mathcal Q}.
\end{array}\right.
$$

\subsection{Syzygy Criterion and Rewritten Criterion}

First, we describe the Syzygy Criterion. We begin by the following
definition.

\begin{define}[Comparable]
Let ${\mathcal F}=(x^\alpha\eb_i, f, k_f)\in L[X]$ be a labeled
polynomial, $cx^\gamma$ a non-zero monomial in $X$ and $B\subset
L[X]$ a set of labeled polynomials. The labeled polynomial
$cx^\gamma{\mathcal F}=(x^{\gamma+\alpha}\eb_i, cx^\gamma f, k_f)$
is said to be comparable by $B$, if there exists a labeled
polynomial ${\mathcal G}=(x^\beta\eb_j, g, k_g)\in B$ such that:
\begin{enumerate}

\item $\lpp(g) \mid x^{\gamma+\alpha}$, and

\item $\eb_i \succ \eb_j$, i.e. $i<j$.
\end{enumerate}
\end{define}

Then the Syzygy Criterion is described as follow.

\begin{criterion} \caption{\bf --- Syzygy Criterion}
  \smallskip
  \medskip

  \noindent
Let $[{\mathcal F}, {\mathcal G}]:=(u, {\mathcal F}, v, {\mathcal
G})$ be the critical pair of ${\mathcal F}$ and ${\mathcal G}$,
where $u, v$ are monomials in $X$ such that $u\lm({\mathcal
F})=v\lm({\mathcal G})=\lcm(\lpp({\mathcal F}), \lpp({\mathcal
G}))$ and $u{\mathcal F} \sul v{\mathcal G}$. And $B\subset L[X]$
is a set of labeled polynomials. If either $u{\mathcal F}$ or
$v{\mathcal G}$ is {\bf comparable} by $B$, then the critical pair
$[{\mathcal F}, {\mathcal G}]$ meets the Syzygy Criterion.
  \medskip
\end{criterion}

Next, we describe the Rewritten Criterion. Again we start with a
definition.

\begin{define}[Rewritable]
Let ${\mathcal F}=(x^\alpha\eb_i, f, k_f)\in L[X]$ be a labeled
polynomial, $cx^\gamma$ a non-zero monomial in $X$ and $B\subset
L[X]$ a set of labeled polynomials. The labeled polynomial
$cx^\gamma{\mathcal F}=(x^{\gamma+\alpha}\eb_i, cx^\gamma f, k_f)$
is said to be rewritable by $B$, if there exists a labeled
polynomial ${\mathcal G}=(x^\beta\eb_i, g, k_g)\in B$, such that:
\begin{enumerate}

\item $x^\beta\eb_i \mid x^{\gamma+\alpha}\eb_i$, and

\item $\n({\mathcal F})<\n({\mathcal G})$, i.e. $k_f < k_g$.

\end{enumerate}
\end{define}

The Rewritten Criterion is given as follow.

\begin{criterion} \caption{\bf --- Rewritten Criterion}
  \smallskip
  \medskip

  \noindent
Let $[{\mathcal F}, {\mathcal G}]:=(u, {\mathcal F}, v, {\mathcal
G})$ be the critical pair of ${\mathcal F}$ and ${\mathcal G}$,
where $u, v$ are monomials in $X$ such that $u\lm({\mathcal
F})=v\lm({\mathcal G})=\lcm(\lpp({\mathcal F}), \lpp({\mathcal
G}))$ and $u{\mathcal F} \sul v{\mathcal G}$. And $B\subset L[X]$
is a set of labeled polynomials. If either $u{\mathcal F}$ or
$v{\mathcal G}$ is {\bf rewritable} by $B$, then the critical pair
$[{\mathcal F}, {\mathcal G}]$ meets the Rewritten Criterion.
  \medskip
\end{criterion}

In F5 (F5B) algorithm, if a critical pair meets either Syzygy
Criterion or Rewritten Criterion, then it is not necessary to reduce
its corresponding s-polynomial.

%%%%%%%%%%%%%%%%%%%%%%%%%%%%%%%%%%%6.28%%%%%%%%%%%%%%%%%%%%%%%%%%%%%%%%%%%%%%%

\subsection{F5-Reduction}  \label{subsec-reduce}

The concept of signatures itself is not sufficient to ensure the
correctness of two new criteria. It is the special reduction
procedure that guarantees the critical pairs detected by criteria
are really useless. The same is true for other F5-like algorithms.

Let us start with the definition of F5-reduction.

\begin{define}[F5-reduction] \label{df-reduce}
Let ${\mathcal F}=(x^\alpha\fb_i, f)\in L[X]$ be a labeled
polynomial and $B\subset L[X]$ a set of labeled polynomials. The
labeled polynomial ${\mathcal F}$ is F5-reducible by $B$, if there
exists ${\mathcal G}=(x^\beta\fb_j, g)\in B$ such that:
\footnote{Deleting the conditions 3 and 4 does not affect the
correctness of algorithm, but leads to redundant
computations/reductions.}
\begin{enumerate}

\item $\lpp(g) \mid \lpp(f)$, denote $x^\gamma=\lpp(f)/\lpp(g)$
and $c=\lc(f)/\lc(g)$,

\item $\s({\mathcal F})\succ \s(cx^\gamma{\mathcal G})$, i.e.
$x^\alpha\eb_i \succ x^{\gamma+\beta}\eb_j$,

\item $x^\gamma{\mathcal G}$ is {\bf not} comparable by $B$, and

\item $x^\gamma{\mathcal G}$ is {\bf not} rewritable by $B$.
\end{enumerate}
If ${\mathcal F}$ is F5-reducible by $B$, let ${\mathcal
F}'={\mathcal F}-cx^\gamma{\mathcal G}$. Then this procedure:
${\mathcal F}\Longrightarrow_B {\mathcal F}'$ is called one step
F5-reduction. If ${\mathcal F}'$ is still F5-reducible by $B$, then
repeat this step until ${\mathcal F}'$ is not F5-reducible by $B$.
Suppose ${\mathcal F}^*$ is the final result that is not
F5-reducible by $B$. We say ${\mathcal F}$ F5-reduces to ${\mathcal
F}^*$ by $B$, and denote it as ${\mathcal F}\Longrightarrow_B^*
{\mathcal F}^*$.

%Particularly, if the polynomial part $\p({\mathcal F}')=0$, we say
%${\mathcal F}$ F5-reduces to $0$ by $B$ and denote it as
%${\mathcal F}\Longrightarrow_B^*0$ for short.
\end{define}

The key of F5-reduction is the condition $\s({\mathcal F})\succ
\s(cx^\gamma{\mathcal G})$, i.e. $x^\alpha\eb_i \succ
x^{\gamma+\beta}\eb_j$, which makes F5-reduction much different from
other general reductions. The major function of this condition is to
preserve the signature of ${\mathcal F}$ during reductions. Thus a
direct result is that, if labeled polynomial ${\mathcal F}$
F5-reduces to ${\mathcal F}^*$ by $B$ (i.e. ${\mathcal
F}\Longrightarrow_B^* {\mathcal F}^*$), then the signatures of
${\mathcal F}$ and ${\mathcal F}^*$ are identical, i.e.
$$\s({\mathcal F})=\s({\mathcal F}^*).$$ This property plays a
crucial role in the main proofs for the correctness of F5B
algorithm. For convenience of reference, we describe this property
by the following proposition.

\begin{prop}[F5-reduction property] \label{prop-reduce}
If labeled polynomial ${\mathcal F}$ F5-reduce to ${\mathcal F}^*$
by set $B$, i.e. ${\mathcal F}\Longrightarrow_B^* {\mathcal F}^*$,
then there exist polynomials $p_1,\cdots, p_s\in \K[X]$ and labeled
polynomials ${\mathcal G}_1, \cdots, {\mathcal G}_s\subset B$, such
that:
$${\mathcal F}={\mathcal F}^*+p_1{\mathcal G}_1+\cdots+p_s{\mathcal
G}_s,$$ where leading power product $\lpp({\mathcal F})\succeq
\lpp(p_i{\mathcal G}_i)$ and signature $\s({\mathcal F})\sus
\s(p_i{\mathcal G}_i)$ for $1\le i\le s$. Moreover, signature
$\s({\mathcal F})= \s({\mathcal F}^*)$ and labeled polynomial
${\mathcal F}\bowtie {\mathcal F}^*$.
\end{prop}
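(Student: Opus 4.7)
The plan is to prove Proposition 4.1 by induction on the number of one-step F5-reductions in the chain $\mathcal{F} \Longrightarrow_B^* \mathcal{F}^*$. The base case of zero reductions is trivial: take $\mathcal{F}^* = \mathcal{F}$ with empty sum, and $\s(\mathcal{F}) = \s(\mathcal{F}^*)$, $\mathcal{F} \bowtie \mathcal{F}^*$ hold immediately.

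For the inductive step, I would first analyze a single reduction $\mathcal{F} \Longrightarrow_B \mathcal{F}'$, where by Definition 4 there exists $\mathcal{G} \in B$ with $\lpp(g) \mid \lpp(f)$, $x^\gamma = \lpp(f)/\lpp(g)$, $c = \lc(f)/\lc(g)$, and crucially $\s(\mathcal{F}) \succ \s(cx^\gamma \mathcal{G})$, so that $\mathcal{F}' = \mathcal{F} - cx^\gamma \mathcal{G}$. The two facts I need to extract from this single step are: (i) $\lpp(cx^\gamma \mathcal{G}) = x^\gamma \lpp(g) = \lpp(f) = \lpp(\mathcal{F})$, and (ii) by the strict signature inequality together with the addition rule for labeled polynomials (which carries the number and signature of the maximal summand), we have $\s(\mathcal{F}') = \s(\mathcal{F})$ and $\n(\mathcal{F}') = \n(\mathcal{F})$, i.e.\ $\mathcal{F} \bowtie \mathcal{F}'$. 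Since the leading monomials cancel, we also get $\lpp(\mathcal{F}') \prec \lpp(\mathcal{F})$.

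I then apply the induction hypothesis to the shorter chain $\mathcal{F}' \Longrightarrow_B^* \mathcal{F}^*$ to obtain a decomposition
$$\mathcal{F}' = \mathcal{F}^* + \sum_{j} q_j \mathcal{H}_j$$
with $\lpp(\mathcal{F}') \succeq \lpp(q_j \mathcal{H}_j)$, $\s(\mathcal{F}') \sus \s(q_j \mathcal{H}_j)$, $\s(\mathcal{F}') = \s(\mathcal{F}^*)$ and $\mathcal{F}' \bowtie \mathcal{F}^*$. Substituting $\mathcal{F}' = \mathcal{F} - cx^\gamma \mathcal{G}$ yields
$$\mathcal{F} = \mathcal{F}^* + cx^\gamma \mathcal{G} + \sum_{j} q_j \mathcal{H}_j.$$
For the leading-power-product bound on the new summand $cx^\gamma\mathcal{G}$, I use (i) above; for the $q_j\mathcal{H}_j$ summands I use $\lpp(\mathcal{F}) \succ \lpp(\mathcal{F}') \succeq \lpp(q_j\mathcal{H}_j)$. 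For the strict signature bound, the new summand $cx^\gamma\mathcal{G}$ satisfies $\s(\mathcal{F}) \succ \s(cx^\gamma\mathcal{G})$ by the F5-reduction definition, and for the $q_j\mathcal{H}_j$ summands I combine $\s(\mathcal{F}) = \s(\mathcal{F}')$ with the inductive inequality $\s(\mathcal{F}') \sus \s(q_j\mathcal{H}_j)$. Finally, chaining $\s(\mathcal{F}) = \s(\mathcal{F}') = \s(\mathcal{F}^*)$ and $\mathcal{F} \bowtie \mathcal{F}' \bowtie \mathcal{F}^*$ (since $\bowtie$ is transitive by definition) gives the last two claims.

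The whole argument is routine once the correct invariant is identified; I do not expect a genuinely hard step. The one subtle point worth checking carefully is the second clause of $\bowtie$, namely that the \emph{number} of the labeled polynomial is preserved by each reduction step. This requires reading off from the labeled-polynomial addition rule that $k_{f,g}$ is taken from the summand whose signature attains the maximum, and then invoking the strict inequality $\s(\mathcal{F}) \succ \s(cx^\gamma\mathcal{G})$ supplied by the F5-reduction definition; if instead one only had a non-strict inequality, this number-preservation (and hence the $\bowtie$ conclusion) could fail.
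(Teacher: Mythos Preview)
Your proof is correct and is exactly the routine induction that the paper has in mind: the paper itself does not write out any argument, stating only that ``the proof of this proposition is trivial by the definition of F5-reduction.'' Your careful treatment of the number-preservation via the strict signature inequality in condition~2 of the F5-reduction definition is precisely the content the paper leaves implicit.
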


The proof of this proposition is trivial by the definition of
F5-reduction.

\subsection{The F5 algorithm in Buchberger's style} \label{subsec-buchstyle}

With the definitions of Syzygy Criterion, Rewritten Criterion and
F5-reduction, we can simplify the F5 algorithm in Buchberger's style
(F5B algorithm).

\begin{algorithm}[!ht] \caption{\bf --- The F5 algorithm in Buchberger's style (F5B algorithm)} \label{algorithm1}
  \smallskip
  \Input{a polynomial $m$-tuple: $(f_1,\cdots,f_m)\subset K[X]^m$, and an admissible order $\prec$.}\\
  \Output{The \gr basis of the ideal $\langle f_1,\cdots,f_m\rangle \subset K[X]$.}
  \medskip

  \noindent
  \lbegin\\
  \SPC ${\mathcal F}_i \lla (\eb_i, f_i, i)$ for $i=1,\cdots,m$\\
  \SPC $k \lla m$ \SPC $\#$ to track the {\em number} of labled polynomials\\
  \SPC $B \lla \{{\mathcal F}_i  \mid  i=1,\cdots,m\}$\\
  \SPC $CP \lla \{\mbox{critical pair } [{\mathcal F}_i, {\mathcal F}_j] \mid  1\le i<j \le  m\}$\\
  \SPC  \lwhile $CP$ is not empty \ldo\\
  \SPC\SPC  $cp \lla $ select a critical pair from $CP$\\
  \SPC\SPC  $CP \lla CP \setminus \{cp\}$\\
  \SPC\SPC  \lif $cp$ meets {\bf neither} Syzygy Criterion {\bf nor} Rewritten
  Criterion,\\
  \SPC\SPC\SPC \lthen\\
  \SPC\SPC\SPC\SPC  ${\mathcal SP} \lla $ the s-polynomial of critical pair
  $cp$\\
  \SPC\SPC\SPC\SPC  ${\mathcal P} \lla $ the F5-reduction result of ${\mathcal SP}$ by $B$, i.e. ${\mathcal SP}\Longrightarrow_{B}^* {\mathcal P}$\\
  \SPC\SPC\SPC\SPC  $\n({\mathcal P}) \lla k+1$ \SPC $\#$ update
  the {\em number} of ${\mathcal P}$\\
  \SPC\SPC\SPC\SPC  \lif the polynomial of ${\mathcal P}$ is {\bf not} $0$, i.e. $\p({\mathcal P})\not= 0$,\\
  \SPC\SPC\SPC\SPC\SPC \lthen \\
  \SPC\SPC\SPC\SPC\SPC\SPC $CP \lla CP \cup \{\mbox{critical pair } [{\mathcal P}, {\mathcal Q}] \mid {\mathcal Q}\in B\}$\\
  \SPC\SPC\SPC\SPC \lendif\\
  \SPC\SPC\SPC\SPC  $k \lla k+1$\\
  \SPC\SPC\SPC\SPC  $B \lla B \cup \{{\mathcal P}\}$ \SPC $\#$ no matter whether $\p({\mathcal P})\not= 0$ or not\\
  \SPC\SPC \lendif\\
  \SPC \lendwhile\\
  \SPC \lreturn $\{\mbox{polynomial part of } {\mathcal Q} \mid {\mathcal Q}\in B\}$\\
  \lend
  \medskip
\end{algorithm}

According to the above algorithm, the {\em number} $\n({\mathcal
P})$ of labeled polynomial ${\mathcal P}$ is actually the order of
when ${\mathcal P}$ is being added to the set $B$. So the bigger
$\n({\mathcal P})$ is, the later ${\mathcal P}$ is generated. Notice
that the {\em numbers} of labeled polynomials in the set $B$ are
distinct from each other.

The strategy of selecting critical pairs is not specified in the F5B
algorithm, instead we simply use $$cp \lla \mbox{ select a critical
pair from } CP,$$ since the new proof proposed in next section does
not depend on the specifical strategies. Moveover, we have shown in
\citep{SunWang10} that the original F5 algorithm differs from F5B
algorithm only by a strategy of selecting critical pairs, so the
proof for the correctness of F5B algorithm can also prove the
correctness of the original F5 (or F5-like) algorithm. So next, we
focus on proving the correctness of F5B algorithm.

\section{A New Proof for the Correctness of F5B Algorithm} \label{sec-proof}

The main work of this section is to prove the correctness of F5B
algorithm presented in last section, i.e. show that the outputs of
F5B algorithm construct a \gr basis of the ideal $\langle f_1,
\cdots, f_m\rangle\subset \K[X]$.

This section is organized as follows. First, we show the difficult
point in the whole proofs by a toy example; second, we sketch the
structure of proofs and prove the main theorem; at last, we provide
the detail proofs for the lemmas and propositions used in the proof
of main theorem.

\subsection{The Thorny Problem} \label{sec-thornyproblem}

There exists a very interesting thing in F5B (or F5) algorithm. That
is, when a critical pair is detected and discarded by the two
criteria, this critical pair is usually not useless at that time
(i.e. its s-polynomial cannot F5-reduce to $0$ by the corresponding
set $B$), but when the algorithm terminates, this detected critical
pair becomes really redundant (i.e. its s-polynomial F5-reduce to
$0$ by the final set $B$). This indicates that the two criteria of
F5 algorithm can detect unnecessary computations/reductions in
advance. This is so amazing and becomes a big thorny problem in the
correctness proof of F5B algorithm.

This phenomenon happens frequently, particularly in non-homogeneous
systems. Let us see a toy example first. In order to highlight this
peculiar phenomenon, a special strategy of selecting critical pair
is used.

\begin{example}
Compute the \gr basis of the following system in $\Q[x, y, z]$
with Graded Reverse Lex Order ($x\succ y\succ z$) by F5B
algorithm:
$$\left\{\begin{array}{l} f_1=y^2+yz-x, \\
f_2=y^2-z^2+z.
\end{array}\right. $$
\end{example}

The strategy of selecting critical pairs in this toy example is:
first, find the {\bf minimal} degree of critical pairs in the set
$CP$ (the degree of critical pair $[{\mathcal F}_i, {\mathcal F}_j]$
refers to the degree of $\lcm(\lpp({\mathcal F}_i), \lpp({\mathcal
F}_i))$), and then select the {\bf maximal} critical pair from the
set $CP$ with the order $\prc$ at this minimal degree.

After initialization, the initial labeled polynomials are
$$B^{(0)}=\{{\mathcal F}_1=(\eb_1, y^2+yz-x, 1), {\mathcal
F}_2=(\eb_2, y^2-z^2+z, 2)\},$$ and critical pairs are
$$CP^{(0)}=\{[{\mathcal F}_1, {\mathcal F}_2]\}.$$

{\bf LOOP 1:} Critical pair $[{\mathcal F}_1, {\mathcal F}_2]$ is
selected from set $CP^{(0)}$. The s-polynomial of $[{\mathcal F}_1,
{\mathcal F}_2]$ is $(\eb_1, yz+z^2-x-z, 1)$ which is not
F5-reducible by set $B^{(0)}$. Then after updating the {\em number},
labeled polynomial ${\mathcal F}_3=(\eb_1, yz+z^2-x-z, 3)$ adds to
the set $B^{(0)}$. Now $$B^{(1)}=\{{\mathcal F}_1, {\mathcal F}_2,
{\mathcal F}_3\} \mbox{ and } CP^{(1)}=\{[{\mathcal F}_3, {\mathcal
F}_1], [{\mathcal F}_3, {\mathcal F}_2]\}.$$

{\bf LOOP 2:} Critical pair $[{\mathcal F}_3, {\mathcal F}_1]=(y,
{\mathcal F}_3, z, {\mathcal F}_1)$ is selected from set
$CP^{(1)}$. But labeled polynomial $z{\mathcal F}_1=(z\eb_1,
z(y^2+yz-x), 1)$ is {\bf rewritable} by set $B^{(1)}$, since there
exists labeled polynomial ${\mathcal F}_3=(\eb_1, yz+z^2-x-z, 3)$
in $B^{(1)}$ such that signature $\eb_1 \mid z\eb_1$ and {\em
number} $3 > 1$. So critical pair $[{\mathcal F}_3, {\mathcal
F}_1]$ is rejected by the {\bf Rewritten Criterion}. Now
$$B^{(2)}\{{\mathcal F}_1, {\mathcal F}_2, {\mathcal F}_3\} \mbox{ and }
CP^{(2)}=\{[{\mathcal F}_3, {\mathcal F}_2]\}.$$

{\bf LOOP 3:} Critical pair $[{\mathcal F}_3, {\mathcal F}_2]$ is
selected from set $ CP^{(2)}$. The s-polynomial of $[{\mathcal
F}_3, {\mathcal F}_2]$ is $(y\eb_1, yz^2+z^3-xy-yz-z^2, 3)$ which
F5-reduces to $(y\eb_1, -xy-yz+xz, 3)$ by set $B^{(2)}$. Then
after updating the {\em number}, labeled polynomial ${\mathcal
F}_4=(y\eb_1, -xy-yz+xz, 4)$ adds to the set $B^{(2)}$. Now
$$B^{(3)}=\{{\mathcal F}_1, {\mathcal F}_2, {\mathcal F}_3,
{\mathcal F}_4\} \mbox{ and } CP^{(3)}=\{[{\mathcal F}_4,
{\mathcal F}_1], [{\mathcal F}_4, {\mathcal F}_2], [{\mathcal
F}_4, {\mathcal F}_3]\}.$$

{\bf LOOP 4:} Critical pair $[{\mathcal F}_4, {\mathcal F}_1]=(-y,
{\mathcal F}_4, x, {\mathcal F}_1)$ is selected from set $
CP^{(3)}$. But labeled polynomial $-y{\mathcal F}_4=(y^2\eb_1,
-y(-xy-yz+xz), 4)$ is {\bf comparable} by set $B^{(3)}$, since
there exists labeled polynomial ${\mathcal F}_2=(\eb_2, y^2-z^2+z,
2)$ in $B^{(3)}$ such that leading power product $\lpp({\mathcal
F}_2)=y^2 \mid y^2$ and $\eb_1 \succ \eb_2$. So critical pair
$[{\mathcal F}_4, {\mathcal F}_1]$ is rejected by the {\bf Syzygy
Criterion}. Now
$$B^{(4)}=\{{\mathcal F}_1, {\mathcal F}_2, {\mathcal F}_3,
{\mathcal F}_4\} \mbox { and } CP^{(4)}=\{[{\mathcal F}_4,
{\mathcal F}_2], [{\mathcal F}_4, {\mathcal F}_3]\}.$$

{\bf LOOP 5:} Critical pair $[{\mathcal F}_4, {\mathcal F}_2]=(-y,
{\mathcal F}_4, x, {\mathcal F}_2)$ is selected from set $
CP^{(4)}$. But labeled polynomial $-y{\mathcal F}_4=(y^2\eb_1,
-y(-xy-yz+xz), 4)$ is {\bf comparable} by set $B^{(4)}$, since
there exists labled polynomial ${\mathcal F}_2=(\eb_2, y^2-z^2+z,
2)$ in $B^{(4)}$ such that leading power product $\lpp({\mathcal
F}_2)=y^2 \mid y^2$ and $\eb_1 \succ \eb_2$. So critical pair
$[{\mathcal F}_4, {\mathcal F}_2]$ is rejected by the {\bf Syzygy
Criterion}. Now
$$B^{(5)}=\{{\mathcal F}_1, {\mathcal F}_2, {\mathcal F}_3,
{\mathcal F}_4\} \mbox { and } CP^{(5)}=\{[{\mathcal F}_4,
{\mathcal F}_3]\}.$$

{\bf LOOP 6:} Critical pair $[{\mathcal F}_4, {\mathcal F}_3]$ is
selected from set $ CP^{(5)}$. The s-polynomial of $[{\mathcal
F}_4, {\mathcal F}_3]$ is $(yz\eb_1, -2xz^2+yz^2+x^2+xz, 4)$ which
is not F5-reducible by set $B^{(5)}$. Then after updating the {\em
number}, labeled polynomial ${\mathcal F}_5=(yz\eb_1,
-2xz^2+yz^2+x^2+xz, 5)$ adds to the set $B^{(5)}$. Now
$$B^{(6)}=\{{\mathcal F}_1, {\mathcal F}_2, {\mathcal F}_3,
{\mathcal F}_4, {\mathcal F}_5\} \mbox { and }
CP^{(6)}=\{[{\mathcal F}_5, {\mathcal F}_1], [{\mathcal F}_5,
{\mathcal F}_2], [{\mathcal F}_5, {\mathcal F}_3], [{\mathcal
F}_5, {\mathcal F}_4]\}.$$

{\bf LOOP 7:} Critical pair $[{\mathcal F}_5, {\mathcal
F}_4]=(-y/2, {\mathcal F}_5, -z^2, {\mathcal F}_4)$ is selected
from set $ CP^{(6)}$. But labeled polynomial $(-y/2){\mathcal
F}_5=(y^2z\eb_1, (-y/2)(-2xz^2+yz^2+x^2+xz), 5)$ is {\bf
comparable} by set $B^{(6)}$, since there exists labeled
polynomial ${\mathcal F}_2=(\eb_2, y^2-z^2+z, 2)$ in $B^{(6)}$
such that leading power product $\lpp({\mathcal F}_2)=y^2 \mid
y^2z$ and $\eb_1 \succ \eb_2$. So critical pair $[{\mathcal F}_5,
{\mathcal F}_4]$ is rejected by the {\bf Syzygy Criterion}. Now
$$B^{(7)}=\{{\mathcal F}_1, {\mathcal F}_2, {\mathcal F}_3,
{\mathcal F}_4, {\mathcal F}_5\} \mbox { and }
CP^{(7)}=\{[{\mathcal F}_5, {\mathcal F}_1], [{\mathcal F}_5,
{\mathcal F}_2], [{\mathcal F}_5, {\mathcal F}_3]\}.$$

{\bf LOOP 8:} Critical pair $[{\mathcal F}_5, {\mathcal
F}_3]=(-y/2, {\mathcal F}_5, xz, {\mathcal F}_3)$ is selected from
set $ CP^{(7)}$. But labeled polynomial $(-y/2){\mathcal
F}_5=(y^2z\eb_1, (-y/2)(-2xz^2+yz^2+x^2+xz), 5)$ is {\bf
comparable} by set $B^{(7)}$, since there exists labeled
polynomial ${\mathcal F}_2=(\eb_2, y^2-z^2+z, 2)$ in $B^{(7)}$
such that leading power product $\lpp({\mathcal F}_2)=y^2 \mid
y^2z$ and $\eb_1 \succ \eb_2$. So critical pair $[{\mathcal F}_5,
{\mathcal F}_3]$ is rejected by the {\bf Syzygy Criterion}. Now
$$B^{(8)}=\{{\mathcal F}_1, {\mathcal F}_2, {\mathcal F}_3,
{\mathcal F}_4, {\mathcal F}_5\} \mbox { and }
CP^{(8)}=\{[{\mathcal F}_5, {\mathcal F}_1], [{\mathcal F}_5,
{\mathcal F}_2]\}.$$

{\bf LOOP 9:} Critical pair $[{\mathcal F}_5, {\mathcal
F}_1]=(-y^2/2, {\mathcal F}_5, xz^2, {\mathcal F}_1)$ is selected
from set $ CP^{(8)}$. But labeled polynomial $(-y^2/2){\mathcal
F}_5=(y^3z\eb_1, (-y^2/2)(-2xz^2+yz^2+x^2+xz), 5)$ is {\bf
comparable} by set $B^{(8)}$, since there exists labeled
polynomial ${\mathcal F}_2=(\eb_2, y^2-z^2+z, 2)$ in $B^{(8)}$
such that leading power product $\lpp({\mathcal F}_2)=y^2 \mid
y^3z$ and $\eb_1 \succ \eb_2$. So critical pair $[{\mathcal F}_5,
{\mathcal F}_1]$ is rejected by the {\bf Syzygy Criterion}. Now
$$B^{(9)}=\{{\mathcal F}_1, {\mathcal F}_2, {\mathcal F}_3,
{\mathcal F}_4, {\mathcal F}_5\} \mbox { and }
CP^{(9)}=\{[{\mathcal F}_5, {\mathcal F}_2]\}.$$

{\bf LOOP 10:} Critical pair $[{\mathcal F}_5, {\mathcal
F}_2]=(-y^2/2, {\mathcal F}_5, xz^2, {\mathcal F}_2)$ is selected
from $ CP^{(9)}$. But labeled polynomial $(-y^2/2){\mathcal
F}_5=(y^3z\eb_1, (-y^2/2)(-2xz^2+yz^2+x^2+xz), 5)$ is {\bf
comparable} by set $B^{(9)}$, since there exists labeled
polynomial ${\mathcal F}_2=(\eb_2, y^2-z^2+z, 2)$ in $B^{(9)}$
such that leading power product $\lpp({\mathcal F}_2)=y^2 \mid
y^3z$ and $\eb_1 \succ \eb_2$. So critical pair $[{\mathcal F}_5,
{\mathcal F}_2]$ is rejected by the {\bf Syzygy Criterion}. Now
$$B^{(10)}=\{{\mathcal F}_1, {\mathcal F}_2, {\mathcal F}_3,
{\mathcal F}_4, {\mathcal F}_5\} \mbox { and }
CP^{(10)}=\emptyset.$$

Since set $CP^{(10)}$ is empty, F5B algorithm terminates and the
final set $B^{(10)}=\{{\mathcal F}_1, {\mathcal F}_2, {\mathcal
F}_3$, ${\mathcal F}_4, {\mathcal F}_5\}$. Then the polynomial set
$\{\p({\mathcal F}_1), \p({\mathcal F}_2), \p({\mathcal F}_3),
\p({\mathcal F}_4), \p({\mathcal F}_5)\}$ is a \gr basis of the
ideal generated by $\{f_1=y^2+yz-x,f_2=y^2-z^2+z\}$.

At last, we check whether the critical pairs rejected by two
criteria are really redundant. The labeled polynomial in round
bracket is the s-polynomial of corresponding critical pair. \\
{\bf LOOP 2:} $[{\mathcal F}_3, {\mathcal F}_1]=(y, {\mathcal
F}_3, z, {\mathcal F}_1)$, then
$$(y{\mathcal F}_3-z{\mathcal F}_1)-{\mathcal F}_4=(y\eb_1, 0, 3).$$
{\bf LOOP 4:} $[{\mathcal F}_4, {\mathcal F}_1]=(-y, {\mathcal
F}_4, x, {\mathcal F}_1)$, then
$$(-y{\mathcal F}_4-x{\mathcal F}_1)+2x{\mathcal
F}_3-z{\mathcal F}_1+{\mathcal F}_5=(y^2\eb_1, 0, 4).$$ {\bf LOOP
5:} $[{\mathcal F}_4, {\mathcal F}_2]=(-y, {\mathcal F}_4, x,
{\mathcal F}_2)$, then $$(-y{\mathcal F}_4-x{\mathcal
F}_2)+x{\mathcal F}_3-z{\mathcal F}_1+{\mathcal F}_5=(y^2\eb_1, 0,
4).$$ {\bf LOOP 7:} $[{\mathcal F}_5, {\mathcal F}_4]=(-y/2,
{\mathcal F}_5, -z^2, {\mathcal F}_4)$, then
$$((-y/2){\mathcal F}_5+ z^2{\mathcal F}_4)+(z^2/2){\mathcal
F}_1+(z/2){\mathcal F}_5-(x/2){\mathcal F}_4=(-y^2z\eb_1, 0, 5).$$
{\bf LOOP 8:} $[{\mathcal F}_5, {\mathcal F}_3]=(-y/2, {\mathcal
F}_5, xz, {\mathcal F}_3)$, then
$$((-y/2){\mathcal F}_5-xz{\mathcal F}_3)+(z^2/2){\mathcal
F}_1-(z/2){\mathcal F}_5-(x/2){\mathcal F}_4=(y^2z\eb_1, 0, 5).$$
{\bf LOOP 9:} $[{\mathcal F}_5, {\mathcal F}_1]=(-y^2/2, {\mathcal
F}_5, xz^2, {\mathcal F}_1)$, then
$$((-y^2/2){\mathcal F}_5-xz^2{\mathcal F}_1)+(yz^2/2-z^3/2+x^2/2+xz/2){\mathcal F}_1 + (xz^2-x^2/2){\mathcal F}_3+(z^2/2){\mathcal F}_5
=(y^3z\eb_1, 0, 5).$$ {\bf LOOP 10:} $[{\mathcal F}_5, {\mathcal
F}_2]=(-y^2/2, {\mathcal F}_5, xz^2, {\mathcal F}_2)$, then
$$((-y^2/2){\mathcal F}_5-xz^2{\mathcal
F}_2)+(yz^2/2-z^3/2+x^2/2+xz/2){\mathcal F}_1+(z^2/2){\mathcal
F}_5-(x^2/2){\mathcal F}_3=(y^3z\eb_1, 0, 5).$$ All these
s-polynomials F5-reduces to $0$ by $B^{(10)}$, so both the
criteria are correct.

\begin{remark}
Notice that the s-polynomial of $[{\mathcal F}_3, {\mathcal F}_1]$
F5-reduces to $0$ by the labeled polynomial ${\mathcal F}_4$, which
is generated in LOOP 3. However, the critical pair $[{\mathcal F}_3,
{\mathcal F}_1]$ is rejected in LOOP 2, which implies that when this
critical pair is being discarded, its s-polynomial $y{\mathcal
F}_3-z{\mathcal F}_1$ cannot F5-reduce to 0 by the set
$B^{(1)}=\{{\mathcal F}_1, {\mathcal F}_2, {\mathcal F}_3\}$.
Similar cases also happen to critical pairs $[{\mathcal F}_4,
{\mathcal F}_1]$ and $[{\mathcal F}_4, {\mathcal F}_2]$. These facts
illustrate the thorny problem mentioned earlier.
\end{remark}

This thorny problem is a big handicap for the correctness proof of
F5B (or F5) algorithm, and as we know, it is not well handled in
other existing proofs for F5 algorithm.

The new proof presented in this paper averts this thorny problem
subtly. Instead of proving the critical pairs are useless when they
are being detected, we concentrate on showing that these critical
pairs are redundant after the algorithm terminates. This is detailed
in next subsection.

\subsection{Main Theorem}

In order to show the detected critical pairs are redundant after
the algorithm terminates, we need to save these critical pairs and
discuss them afterwards. Thus, we modify F5B algorithm slightly.

\begin{algorithm}[!ht] \caption{\bf --- The F5B algorithm modified by a subtle trick (F5M algorithm)} \label{alg-f5m}
  \smallskip
  \Input{a polynomial $m$-tuple: $(f_1,\cdots,f_m)\subset K[X]^m$, and an admissible order $\prec$.}\\
  \Output{The \gr basis of the ideal $\langle f_1,\cdots,f_m\rangle \subset K[X]$.}
  \medskip

  \noindent
  \lbegin\\
  \SPC ${\mathcal F}_i \lla (\eb_i, f_i, i)$ for $i=1,\cdots,m$\\
  \SPC $k \lla m$ \SPC $\#$ to track the {\em number} of labled polynomials\\
  \SPC $B \lla \{{\mathcal F}_i  \mid  i=1,\cdots,m\}$\\
  \SPC $D \lla \emptyset$\\
  \SPC $CP \lla \{\mbox{critical pair } [{\mathcal F}_i, {\mathcal F}_j] \mid  1\le i<j \le  m\}$\\
  \SPC  \lwhile $CP$ is not empty \ldo\\
  \SPC\SPC  $cp \lla $ select a critical pair from $CP$\\
  \SPC\SPC  $CP \lla CP \setminus \{cp\}$\\
  \SPC\SPC  \lif $cp$ meets {\bf neither} Syzygy Criterion {\bf nor} Rewritten
  Criterion,\\
  \SPC\SPC\SPC \lthen\\
  \SPC\SPC\SPC\SPC  ${\mathcal SP} \lla $ the s-polynomial of critical pair
  $cp$\\
  \SPC\SPC\SPC\SPC  ${\mathcal P} \lla $ the F5-reduction result of ${\mathcal SP}$ by $B$,
  i.e. ${\mathcal SP}\Longrightarrow_{B}^* {\mathcal P}$\\
  \SPC\SPC\SPC\SPC  $\n({\mathcal P}) \lla k+1$ \SPC $\#$ update
  the {\em number} of ${\mathcal P}$\\
  \SPC\SPC\SPC\SPC  \lif the polynomial of ${\mathcal P}$ is {\bf not} $0$, i.e. $\p({\mathcal P})\not= 0$,\\
  \SPC\SPC\SPC\SPC\SPC \lthen \\
  \SPC\SPC\SPC\SPC\SPC\SPC $CP \lla CP \cup \{\mbox{critical pair } [{\mathcal P}, {\mathcal Q}] \mid {\mathcal Q}\in B\}$\\
  \SPC\SPC\SPC\SPC \lendif\\
  \SPC\SPC\SPC\SPC  $k \lla k+1$\\
  \SPC\SPC\SPC\SPC  $B \lla B \cup \{{\mathcal P}\}$ \SPC $\#$ no matter whether $\p({\mathcal P})\not= 0$ or not\\
  \SPC\SPC\SPC \lelse\\
  \SPC\SPC\SPC\SPC  $D \lla D \cup \{cp\}$ \SPC $\#$ save the detected critical pairs\\
  \SPC\SPC \lendif\\
  \SPC \lendwhile\\
  \SPC \lreturn $\{\mbox{polynomial part of } {\mathcal Q} \mid {\mathcal Q}\in B\}$\\
  \lend
  \medskip
\end{algorithm}

The only difference between the F5B algorithm and F5M algorithm is:
the detected critical pairs are all saved in set $D$. For
convenience, we use the notations $B_{end}$ and $D_{end}$ to express
the corresponding sets $B$ and $D$ when the F5M algorithm
terminates.

Since initial polynomial set $\{f_1, \cdots, f_m\}= \{\p({\mathcal
Q})\mid {\mathcal Q}\in B_0\}$ and $B_0 \subset B_{end}$ by the
F5M algorithm, our main purpose of this paper is to prove the
following correctness theorem.

\begin{theorem}[Correctness Theorem] \label{thm-correctness}
The set $\{\p({\mathcal Q}) \mid {\mathcal Q}\in B_{end}\} \subset
\K[X]$ itself is a \gr basis.
\end{theorem}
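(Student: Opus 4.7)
The plan is to reduce the theorem to a version of Buchberger's criterion phrased in terms of \emph{standard representations} (a.k.a. $t$-representations): $\{\p(\mathcal{Q}) : \mathcal{Q} \in B_{end}\}$ is a Gr\"obner basis if and only if, for every pair of labeled polynomials $\mathcal{F}, \mathcal{G} \in B_{end}$, the s-polynomial $\sp(\mathcal{F}, \mathcal{G})$ admits a representation $\sum_i p_i \p(\mathcal{G}_i)$ with $\mathcal{G}_i \in B_{end}$ and $\lpp(\sp(\mathcal{F}, \mathcal{G})) \succeq \lpp(p_i \p(\mathcal{G}_i))$. The F5-reduction property (Proposition~3.6) produces exactly this sort of leading-power-product-bounded representation, so it is the natural tool.

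I would then partition the pairs among elements of $B_{end}$ according to the history recorded by the F5M algorithm. Every such pair was placed in $CP$ at some loop (this is immediate from the algorithm, since whenever a new $\mathcal{P}$ is added to $B$, all pairs $[\mathcal{P}, \mathcal{Q}]$ with $\mathcal{Q} \in B$ are inserted into $CP$). From there, each pair followed exactly one of two tracks: either (i) it was selected, neither criterion applied, and its s-polynomial was F5-reduced to some $\mathcal{P}$ (possibly with $\p(\mathcal{P}) = 0$, in which case $\mathcal{P}$ is still added to $B$); or (ii) it was rejected and stored in $D_{end}$. For track (i), Proposition~3.6 immediately produces the desired standard representation: if $\p(\mathcal{P}) \neq 0$ then $\mathcal{P} \in B_{end}$ contributes the leading term, and the remaining summands come from $B$ at the time of reduction, which is a subset of $B_{end}$; if $\p(\mathcal{P}) = 0$, Proposition~3.6 directly yields the representation.

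The real content of the proof is track (ii), the pairs discarded by the Syzygy or Rewritten Criterion. As the toy example in Section~4.1 makes clear, these pairs genuinely cannot be reduced to zero using $B$ at the moment of rejection, so the argument must instead exhibit a standard representation using the \emph{final} set $B_{end}$. The natural structure is a Noetherian induction on the signature $x^\alpha \eb_i$ of the s-polynomial (and within equal signatures, on $\lpp$), assuming the statement for every element of the ideal of smaller signature, and splitting into the \textbf{comparable} and \textbf{rewritable} sub-cases. In the comparable case, one uses the witness $\mathcal{H} \in B$ with $\lpp(\p(\mathcal{H})) \mid x^{\gamma+\alpha}$ and index $j<i$ to produce an element with strictly smaller signature (since $\eb_j \prec \eb_i$), to which the induction hypothesis applies. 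In the rewritable case, one replaces $u\mathcal{F}$ by a multiple of the rewriting witness $\mathcal{H}$ (having a larger \emph{number}), and either reduces directly or, again, reaches a strictly smaller signature to which induction applies.

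The main obstacle I anticipate is orchestrating the induction: one must choose the well-founded order carefully and prove a \emph{key lemma} (presumably Lemma~4.11) that converts the purely signature-level information supplied by the two criteria into leading-power-product-level information usable in a standard representation. This conversion is exactly where signatures and leading power products, the two kinds of data the algorithm carries, have to be merged, and it is the heart of why the F5-reduction (with its defining condition $\s(\mathcal{F}) \succ \s(c x^\gamma \mathcal{G})$ preserving signatures) is indispensable: the preservation of signatures under F5-reduction is what makes the inductive descent on signatures well-defined. Once the key lemma is in place, verifying the comparable and rewritable cases should be a routine calculation, and the Correctness Theorem will follow by combining the track (i) and track (ii) analyses with the standard-representation form of Buchberger's criterion.
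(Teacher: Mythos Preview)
Your plan matches the paper's proof almost exactly: the same reduction to the $t$-representation form of Buchberger's criterion, the same split into pairs that were actually F5-reduced (your track (i), the paper's Proposition~4.6) versus pairs rejected by a criterion (your track (ii), handled by the paper's Theorem~4.7), and the same recognition that a key lemma (Lemma~4.11) must convert signature-level information into leading-power-product bounds.

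The one place your sketch diverges from the paper, and where it would not go through as written, is the well-founded order for the induction. You propose (signature, then $\lpp$). The paper instead uses the order $\lhd$ on labeled polynomials---signature first, then generation \emph{number} reversed---extended lexicographically to critical pairs. This matters precisely in the rewritable case: the rewriting witness $\mathcal{H}$ has the \emph{same} signature as $u\mathcal{F}$ (divisibility is within the same $\eb_i$) and an uncontrolled $\lpp$, so neither coordinate of your proposed order decreases. What does decrease is that $\mathcal{H}$ has strictly larger number, giving $u\mathcal{F} \rhd x^\lambda\mathcal{H}$ in $\lhd$. The paper packages this as an intermediate notion of \emph{strictly lower representation} ($\p(\mathcal{F}) = \sum p_i \p(\mathcal{G}_i)$ with $\mathcal{F} \rhd p_i\mathcal{G}_i$), proves that both comparable and rewritable multiples admit one (Propositions~4.9, 4.10, via the signature lemma~4.14), and the key lemma then upgrades any strictly lower representation to a $t$-representation assuming all $\lhd$-smaller critical pairs already have $t$-representations. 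The outer induction on $D_{end}$ then processes critical pairs in increasing $\lhd$-order. You correctly anticipated that the choice of order was the delicate point; the paper's answer is to use the number, not $\lpp$, as the tiebreaker.
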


To prove this theorem, we need a powerful tool: $t$-representation
for labeled polynomials.

\begin{define}[$t$-representation] \label{df-trep}
Let ${\mathcal F}\in L[X]$ be a labeled polynomial, $B\subset L[X]$
a set of labeled polynomials and $t\in PP(X)$ a power product. We
say labeled polynomial ${\mathcal F}$ has a $t$-representation
w.r.t. set $B$, if there exist polynomials $p_1,\cdots,p_s \in K[X]$
and labeled polynomials ${\mathcal G}_1,\cdots,{\mathcal G}_s\in B$,
such that:
$$\p({\mathcal F})=p_1\p({\mathcal G}_1)+\cdots+p_s\p({\mathcal
G}_s),$$ where labeled polynomial ${\mathcal F}\unrhd p_i{\mathcal
G}_i$ and power product $t\succeq \lpp(p_i{\mathcal G}_i)$ for
$i=1,\cdots,s$.
\end{define}

Compared with the definition of $t$-representation in polynomial
version, the $t$-representation for labeled polynomials has an extra
condition ${\mathcal F}\unrhd p_i{\mathcal G}_i$ on the {\em
signatures} and {\em numbers}.

For convenience, we say {\em the critical pair $[{\mathcal F},
{\mathcal G}]=(u, {\mathcal F}, v, {\mathcal G})$ has a
$t$-representation w.r.t. set $B$}, if the s-polynomial of
$[{\mathcal F}, {\mathcal G}]$ has a $t$-representation w.r.t. set
$B$ where $t\prec \lcm(\lpp({\mathcal F}), \lpp({\mathcal G}))$.

The following theorem is the main result on $t$-representation for
labeled polynomials. Its proof is straight from its polynomial
version, so we omit the detail proof here. For interesting readers,
please see \citep{Becker93}.

\begin{theorem}[$t$-representation] \label{thm-gbt}
Let $B\subset L[X]$ be a set of labeled polynomials. If for all
labeled polynomials ${\mathcal F}, {\mathcal G}\in B$, critical
pair $[{\mathcal F}, {\mathcal G}]$ always has a
$t$-representation w.r.t. set $B$, then the polynomial set
$\{\p({\mathcal P})\mid {\mathcal P}\in B\}\subset K[X]$ itself is
a \gr basis.
\end{theorem}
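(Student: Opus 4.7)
The plan is to reduce the labeled version of this theorem directly to the classical polynomial $t$-representation theorem by simply projecting every labeled identity onto its polynomial part. The labeled $t$-representation of Definition~4.2 carries two kinds of constraints: the leading-power-product bound $t\succeq \lpp(p_i{\mathcal G}_i)$, and the extra signature/number bound ${\mathcal F}\unrhd p_i{\mathcal G}_i$. Only the first of these is required to invoke the classical result, so the signature/number bookkeeping can be dropped when we pass to polynomial parts.

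First I would fix an arbitrary pair ${\mathcal F},{\mathcal G}\in B$ and note that the polynomial part of the labeled s-polynomial $u{\mathcal F}-v{\mathcal G}$ coincides, up to a nonzero constant factor, with the classical Buchberger s-polynomial of $\p({\mathcal F})$ and $\p({\mathcal G})$, because $u,v$ are chosen so that $u\lm({\mathcal F})=v\lm({\mathcal G})=\lcm(\lpp({\mathcal F}),\lpp({\mathcal G}))$. By hypothesis, this critical pair admits a labeled $t$-representation
$$u{\mathcal F}-v{\mathcal G} \;=\; p_1{\mathcal G}_1+\cdots+p_s{\mathcal G}_s,$$
with $t\prec \lcm(\lpp({\mathcal F}),\lpp({\mathcal G}))$ and $t\succeq \lpp(p_i{\mathcal G}_i)$ for all $i$. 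Taking polynomial parts of both sides preserves the identity and all leading-power-product bounds, yielding
$$u\,\p({\mathcal F})-v\,\p({\mathcal G}) \;=\; p_1\p({\mathcal G}_1)+\cdots+p_s\p({\mathcal G}_s),$$
which is precisely a classical $t$-representation of this s-polynomial with respect to the polynomial set $\{\p({\mathcal P}) : {\mathcal P}\in B\}$.

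Since the pair ${\mathcal F},{\mathcal G}$ was arbitrary, every classical s-polynomial built from $\{\p({\mathcal P}) : {\mathcal P}\in B\}$ possesses a classical $t$-representation with $t$ strictly below the corresponding $\lcm$. Invoking the polynomial version of the $t$-representation theorem (see \citep{Becker93}) then yields that $\{\p({\mathcal P}) : {\mathcal P}\in B\}$ is a \gr basis, which is the desired conclusion. There is essentially no obstacle to overcome here: the content of the theorem is merely that the extra labeled bookkeeping introduced by F5 is fully compatible with, and hence no weaker than, the classical Buchberger-style representation criterion, which is why the paper can safely omit the detailed proof.
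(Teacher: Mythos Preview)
Your approach is correct and matches the paper's: the paper explicitly omits the proof, remarking that it is ``straight from its polynomial version'' in \citep{Becker93}, which is precisely the reduction you carry out by dropping the signature/number constraints and keeping only the leading-power-product bounds. One minor point of presentation: in Definition~\ref{df-trep} the identity is already stated at the polynomial level as $\p({\mathcal F})=\sum p_i\p({\mathcal G}_i)$, so your ``taking polynomial parts'' step is in fact built into the definition rather than something you need to perform.
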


So far, in order to prove the Correctness Theorem
\ref{thm-correctness}, it suffices to show that for any labeled
polynomials ${\mathcal F}, {\mathcal G} \in B_{end}$, the critical
pair $[{\mathcal F}, {\mathcal G}]$ always has a $t$-representation
w.r.t. set $B_{end}$. In fact, if we examine all these critical
pairs in detail, there are only two kinds of critical pairs
generated by set $B_{end}$:
\begin{enumerate}
\item The ones that have been operated during the loops, i.e.
their s-polynomials have been calculated and then F5-reduced. These
F5-reduction results have added to set $B_{end}$.

\item The ones detected by either Syzygy Criterion or Rewritten
Criterion. In F5M algorithm, all these critical pairs have been
collected into set $D_{end}$.
\end{enumerate}

For the first kind of critical pairs, the following proposition,
which is proved in next subsection, ensures that these critical
pairs have $t$-representations w.r.t. set $B_{end}$.

\begin{prop}[first kind] \label{prop-firstkind}
If a critical pair is operated during the loops, i.e. it is not
detected by the two criteria, then it has a $t$-representation
w.r.t. set $B_{end}$.
\end{prop}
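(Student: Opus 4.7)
The plan is to derive this proposition as an essentially immediate consequence of the F5-reduction property (Proposition 4.5). Suppose the critical pair $[{\mathcal F},{\mathcal G}]=(u,{\mathcal F},v,{\mathcal G})$ is operated in some iteration of the main loop of F5M, and let $B_{curr}$ denote the value of $B$ at the start of that iteration. By inspection $B_{curr}\subseteq B_{end}$, the s-polynomial ${\mathcal SP}=u{\mathcal F}-v{\mathcal G}$ is F5-reduced by $B_{curr}$ to some ${\mathcal P}$, and the algorithm then inserts ${\mathcal P}$ into $B$, so ${\mathcal P}\in B_{end}$ as well.

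First I would apply Proposition 4.5 to the reduction ${\mathcal SP}\Longrightarrow_{B_{curr}}^*{\mathcal P}$ to obtain
$${\mathcal SP}={\mathcal P}+p_1{\mathcal G}_1+\cdots+p_s{\mathcal G}_s$$
with ${\mathcal G}_i\in B_{curr}\subseteq B_{end}$, $\lpp({\mathcal SP})\succeq\lpp(p_i{\mathcal G}_i)$, and $\s({\mathcal SP})\sus\s(p_i{\mathcal G}_i)$ for each $i$. I would then set $t:=\lpp({\mathcal SP})$; because $u\lm({\mathcal F})$ and $v\lm({\mathcal G})$ cancel by construction of the s-polynomial, $t\prec\lcm(\lpp({\mathcal F}),\lpp({\mathcal G}))$, which is the strict bound required of a critical-pair $t$-representation. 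It then remains to check the two clauses of Definition 4.4 for the expansion $\p({\mathcal SP})=\p({\mathcal P})+\sum_i p_i\p({\mathcal G}_i)$ over $B_{end}$: the leading-power-product bounds $\lpp(p_i{\mathcal G}_i)\preceq t$ and $\lpp({\mathcal P})\preceq t$ are immediate (every F5-reduction step strictly lowers the leading power product), and the labeled-polynomial bound ${\mathcal SP}\unrhd p_i{\mathcal G}_i$ follows from $\s({\mathcal SP})\sus\s(p_i{\mathcal G}_i)$.

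The only point calling for explicit care is the labeled comparison ${\mathcal SP}\unrhd{\mathcal P}$. By Proposition 4.5 the two signatures agree, but the algorithm subsequently resets $\n({\mathcal P})\lla k+1$, whereas ${\mathcal SP}$ carries the smaller number $\n({\mathcal F})$ inherited from the labeled arithmetic; since a larger number corresponds to a smaller element in the order $\prl$, one gets ${\mathcal P}\prl{\mathcal SP}$, equivalently ${\mathcal SP}\unrhd{\mathcal P}$. Beyond this minor bookkeeping I do not anticipate any real obstacle: the proposition is essentially a translation of the F5-reduction property into the language of $t$-representations, with the newly inserted ${\mathcal P}$ absorbing the portion of ${\mathcal SP}$ that is not already strictly below it in leading term.
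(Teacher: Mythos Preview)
Your proposal is correct and follows essentially the same route as the paper's proof: both apply the F5-reduction property (Proposition~\ref{prop-reduce}) to the reduction ${\mathcal SP}\Longrightarrow_{B_{curr}}^*{\mathcal P}$, observe $B_{curr}\subseteq B_{end}$, and use the number update $\n({\mathcal P})\lla k+1$ to secure ${\mathcal SP}\sul{\mathcal P}$ and hence the required $t$-representation with $t=\lpp({\mathcal SP})\prec\lcm(\lpp({\mathcal F}),\lpp({\mathcal G}))$. The only cosmetic difference is that the paper separates the cases $\p({\mathcal P})=0$ and $\p({\mathcal P})\neq 0$, whereas you treat them uniformly; since in F5M the updated ${\mathcal P}$ is added to $B$ regardless, your unified treatment is valid (the zero term, if it occurs, is simply dropped from the representation).
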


For the second kind of critical pairs, the proof that they have
$t$-representations w.r.t. set $B_{end}$ is a bit complicated. In
fact, we cannot show this directly, since an extra condition is
necessary.

Let ${\mathcal F}, {\mathcal G}\in L[X]$ be two labeled
polynomials and $B\subset L[X]$ a set of labeled polynomials. We
say {\em all the lower critical pairs of $[{\mathcal F}, {\mathcal
G}]$ have $t$-representations w.r.t. set $B$}, if for any critical
pair $[{\mathcal P}, {\mathcal Q}]$ such that $[{\mathcal P},
{\mathcal Q}] \prl [{\mathcal F}, {\mathcal G}]$ where ${\mathcal
P}, {\mathcal Q}\in B$, the critical pair $[{\mathcal P},
{\mathcal Q}]$ always has a $t$-representation w.r.t. set $B$.

The following theorem shows the second kind of critical pairs have
$t$-representations w.r.t. set $B_{end}$ with an extra condition.

\begin{theorem}[second kind] \label{thm-secondkind}
Let $[{\mathcal F}, {\mathcal G}]=(u, {\mathcal F}, v, {\mathcal
G})$ be a critical pair, where ${\mathcal F}, {\mathcal G} \in
B_{end}$ and $u, v$ are monomials in $X$ such that $u\lm({\mathcal
F})=v\lm({\mathcal G})=\lcm(\lpp({\mathcal F}), \lpp({\mathcal
G}))$. Then the critical pair $[{\mathcal F}, {\mathcal G}]$ has a
$t$-representation w.r.t. set $B_{end}$, if
\begin{enumerate}
\item labeled polynomial $u{\mathcal F}$ (or $v{\mathcal G}$) is
either comparable or rewritable by $B_{end}$, and

\item all the lower critical pairs of $[{\mathcal F}, {\mathcal
G}]$ have $t$-representations w.r.t. set $B_{end}$.
\end{enumerate}
\end{theorem}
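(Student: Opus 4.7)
The plan is to prove the statement by strong induction on the critical pair order $\prl$, splitting $\sp(\mathcal{F},\mathcal{G})=u\mathcal{F}-v\mathcal{G}$ through the covering element supplied by the comparable/rewritable hypothesis, and invoking hypothesis 2 on the resulting strictly lower critical pairs.

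Without loss of generality, assume $u\mathcal{F}$ is comparable or rewritable by $B_{end}$ (the case of $v\mathcal{G}$ is entirely symmetric). I would apply the key lemma (Lemma 4.11) to obtain $\mathcal{H}\in B_{end}$ and a monomial $w$ with $w\,\lm(\mathcal{H})=u\,\lm(\mathcal{F})$ and $w\mathcal{H}\prl u\mathcal{F}$ strictly. The strict inequality arises from the lower POT component $\eb_j$ of $\s(\mathcal{H})$ in the comparable case, and from the strict number comparison $\n(\mathcal{H})>\n(\mathcal{F})$ in the rewritable case. Then decompose
\[\sp(\mathcal{F},\mathcal{G})=(u\mathcal{F}-w\mathcal{H})+(w\mathcal{H}-v\mathcal{G}),\]
and set $L=\lcm(\lpp(\mathcal{F}),\lpp(\mathcal{G}))$, $\ell_1=\lcm(\lpp(\mathcal{F}),\lpp(\mathcal{H}))$, $\ell_2=\lcm(\lpp(\mathcal{H}),\lpp(\mathcal{G}))$; each $\ell_i$ divides $L$. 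The two summands are, up to scalar, the s-polynomials of the critical pairs $[\mathcal{F},\mathcal{H}]$ and $[\mathcal{H},\mathcal{G}]$.

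Next I would verify that both subsidiary critical pairs lie strictly $\prl[\mathcal{F},\mathcal{G}]$. For $[\mathcal{H},\mathcal{G}]$ the argument is clean: its larger element has signature bounded by $\max\{\s(w\mathcal{H}),\s(v\mathcal{G})\}\prec \s(u\mathcal{F})$, where the strict inequality comes from $w\mathcal{H}\prl u\mathcal{F}$ (key lemma) combined with $v\mathcal{G}\prl u\mathcal{F}$ (critical pair ordering). For $[\mathcal{F},\mathcal{H}]$ with canonical multipliers $u'=\ell_1/\lpp(\mathcal{F})$ and $w'=\ell_1/\lpp(\mathcal{H})$, the larger element $u'\mathcal{F}$ satisfies $\s(u'\mathcal{F})\prec \s(u\mathcal{F})$ strictly whenever $\ell_1\prec L$. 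The delicate subcase is $\ell_1=L$, where $u'=u$, $w'=w$, the larger elements tie, and the comparison reduces to $w\mathcal{H}$ versus $v\mathcal{G}$; here I would either iterate the key lemma to replace $\mathcal{H}$ by a minimally rewritten representative forcing $w\mathcal{H}\prl v\mathcal{G}$, or argue directly that the residue $u\mathcal{F}-w\mathcal{H}$, already having leading power product $\prec L$, F5-reduces against $B_{end}$ to yield a $t$-representation without invoking $[\mathcal{F},\mathcal{H}]$ at all.

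Granting strict descent, hypothesis 2 yields $t_i$-representations of the two pieces with $t_i\prec \ell_i\preceq L$. Multiplying each by the scaling factor $L/\ell_i$ and summing produces $\sp(\mathcal{F},\mathcal{G})=\sum_i p_i\,\p(\mathcal{G}_i)$ with $\mathcal{G}_i\in B_{end}$, every $\lpp(p_i\mathcal{G}_i)\prec L$, and every signature bounded by $\s(u\mathcal{F})=\s(\sp(\mathcal{F},\mathcal{G}))$, so $p_i\mathcal{G}_i\unlhd\sp(\mathcal{F},\mathcal{G})$ holds. This is the desired $t$-representation with $t\prec L$. The main obstacle is precisely the boundary subcase $\ell_1=L$, where strict descent in $\prl$ is not automatic and must be secured through either an iterated application of the key lemma or a separate F5-reduction argument on the residue.
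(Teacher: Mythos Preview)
Your proposal has a genuine gap: the two cases are \emph{not} symmetric, and the approach you outline (which handles only the case where the \emph{larger} side $u\mathcal{F}$ is comparable/rewritable) runs into exactly the obstacle you flag but do not resolve.

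In the critical pair $[\mathcal{F},\mathcal{G}]$ we have $u\mathcal{F}\sul v\mathcal{G}$ by definition, so the roles of $u\mathcal{F}$ and $v\mathcal{G}$ are asymmetric. When $u\mathcal{F}$ (the larger side) is comparable/rewritable, your decomposition through $\mathcal{H}$ produces $[\mathcal{F},\mathcal{H}]$, whose larger component is again $u'\mathcal{F}$ with $u'\mid u$. In the subcase $\ell_1=L$ you get $u'=u$, and strict descent would require $w\mathcal{H}\prl v\mathcal{G}$, which nothing guarantees. Your first proposed fix (iterate the key lemma to force $w\mathcal{H}\prl v\mathcal{G}$) has no mechanism: the key lemma only gives $w\mathcal{H}\prl u\mathcal{F}$, and iteration yields a chain below $u\mathcal{F}$, not below $v\mathcal{G}$. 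Your second fix (F5-reduce the residue against $B_{end}$) is circular, since you do not yet know $B_{end}$ is a Gr\"obner basis.

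The paper avoids this entirely by \emph{not} splitting in the left case. Instead, since $u\mathcal{F}$ has a strictly lower representation (Propositions~4.9/4.10) and $v\mathcal{G}\prl u\mathcal{F}$, the s-polynomial $\sp(\mathcal{F},\mathcal{G})=u\mathcal{F}-v\mathcal{G}$ itself has a strictly lower representation; one then applies the key lemma directly to $\sp(\mathcal{F},\mathcal{G})$ and obtains the $t$-representation with $t=\lpp(\sp(\mathcal{F},\mathcal{G}))\prec L$ in one stroke (Proposition~4.12). The splitting-through-$\mathcal{H}$ argument \emph{is} used, but only in the right case where $v\mathcal{G}$ (the smaller side) is comparable/rewritable (Proposition~4.13): there both subsidiary pairs have larger component strictly below $u\mathcal{F}$, so the descent is automatic and your delicate subcase never arises.

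A smaller issue: the $\mathcal{H}$ produced by the key lemma is \emph{not} the witness to comparability/rewritability; it comes from the minimal strictly-lower representation, and the strict inequality $u\mathcal{F}\sul w\mathcal{H}$ is part of the lemma's conclusion, not a consequence of the POT index or the number as you suggest. To invoke the key lemma you must first pass through Propositions~4.9/4.10 to obtain the strictly lower representation.
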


With Proposition \ref{prop-firstkind} (first kind) and Theorem
\ref{thm-secondkind} (second kind), we are now able to prove the
Correctness Theorem \ref{thm-correctness}. The extra condition in
Theorem \ref{thm-secondkind} is satisfied subtly.

\begin{thm31}
The set $\{\p({\mathcal Q}) \mid {\mathcal Q}\in B_{end}\} \subset
\K[X]$ itself is a \gr basis.
\end{thm31}

\begin{proof}
Let $CP_{all}$ be the set of all critical pairs generated by set
$B_{end}$. Then all the critical pairs in $CP_{all}\setminus
D_{end}$ have $t$-representations w.r.t. $B_{end}$ by Proposition
\ref{prop-firstkind} (first kind). Next, it only remains to show
that critical pair $cp$ has a $t$-representation w.r.t. set
$B_{end}$ for all $cp\in D_{end}$.

The strategy of the proof is as follows.
\begin{enumerate}

\item[(1)] Select the minimal critical pair, say $cp_{min}$, from
set $D_{end}$ w.r.t. the order $\sul$.

\item[(2)] Show the critical pair $cp_{min}$ has a $t$-representation
w.r.t. set $B_{end}$.

\item[(3)] Remove the critical pair $cp_{min}$ from set $D_{end}$.
\end{enumerate}
If set $D_{end}$ is not empty, then repeat the steps (1), (2) and
(3). Since the cardinality of set $D_{end}$ is finite, this
procedure terminates after finite steps. If all the critical pairs
in set $D_{end}$ are proved in this way, the theorem is proved.

The steps (1) and (3) are trivial, so it only needs to show how the
step (2) is done. Since critical pair $cp_{min}$ is the minimal one
in set $D_{end}$, then all the critical pairs which are lower than
$cp_{min}$ should be contained in the set $CP_{all}\setminus
D_{end}$ and hence have $t$-representations w.r.t. set $B_{end}$
(because set $D_{end}$ contains all the unproved critical pairs).
Critical pair $cp_{min}\in D_{end}$ also means $cp_{min}$ meets
either Syzygy Criterion or Rewritten Criterion, so the critical pair
$cp_{min}$ has a $t$-representation w.r.t. set $B_{end}$ by Theorem
\ref{thm-secondkind} (second kind).

After all, the critical pairs in $CP_{all}$ all have
$t$-representations w.r.t. set $B_{end}$. Then the polynomial set
$\{\p({\mathcal P}) \mid  {\mathcal P}\in B_{end}\}$ itself is a \gr
basis by Theorem \ref{thm-gbt} ($t$-representation).
\end{proof}

The proof of Proposition \ref{prop-firstkind} (first kind) for the
first kind of critical pairs is simple. However, the proof of
Theorem \ref{thm-secondkind} (second kind) for the second kind of
critical pairs is quite complicated. Next, we sketch the idea of
this proof. All the following lemmas and propositions are proved in
next subsection. We begin by an important definition.

\begin{define}[strictly lower representation]
Let ${\mathcal F}\in L[X]$ be a labeled polynomial and $B\subset
L[X]$ a set of labeled polynomials. We say labeled polynomial
${\mathcal F}$ has a strictly lower representation w.r.t. set $B$,
if there exist polynomials $p_1,\cdots,p_s \in K[X]$ and labeled
polynomials ${\mathcal G}_1, \cdots, {\mathcal G}_s\in B$, such
that: $$ \p({\mathcal F})=p_1\p({\mathcal
G}_1)+\cdots+p_s\p({\mathcal G}_s),$$ where labeled polynomial
${\mathcal F} \sul p_i{\mathcal G}_i$ for $i=1,\cdots,s$.
\end{define}

Compared with the $t$-representation defined earlier, the strictly
lower representation does not need the constraints on the leading
power products $\lpp(p_i{\mathcal G}_i)$. Besides, the relation
``$\unrhd$" in Definition \ref{df-trep} ($t$-representation) becomes
``$\sul$" here, which is why we name it as {\em strictly lower
representation}.

By the above definition, we first have two propositions on {\em
comparable} and {\em rewritable}.

\begin{prop}[comparable] \label{prop-syz}
Let ${\mathcal F}\in B_{end}$ be a labeled polynomial and
$cx^\gamma$ a non-zero monomial in $X$. If labeled polynomial
$cx^\gamma{\mathcal F}$ is {\bf comparable} by $B_{end}$, then
$cx^\gamma{\mathcal F}$ has a strictly lower representation w.r.t.
set $B_{end}$.
\end{prop}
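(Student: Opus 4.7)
The plan is to expand $\p(cx^\gamma{\mathcal F}) = cx^\gamma f$ by hand and verify that every resulting summand sits strictly below $cx^\gamma{\mathcal F}$ in the $\prl$-order. The comparable witness ${\mathcal G}$ will supply exactly the syzygy-flavoured identity needed to kill the single problematic summand.

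First I unfold the signature of ${\mathcal F}$: $\s({\mathcal F}) = x^\alpha \eb_i$ means there exists $\fb \in (\K[X])^m$ with $\sigma(\fb) = f$ and $\lpp(\fb) = x^\alpha \eb_i$. Since the order on $(\K[X])^m$ is POT with $\eb_1 \succ \cdots \succ \eb_m$, we may write $\fb = (0,\ldots,0,h_i,h_{i+1},\ldots,h_m)$ with $\lpp(h_i) = x^\alpha$; let $c_0 = \lc(h_i)$. Multiplying $f = h_i f_i + h_{i+1} f_{i+1} + \cdots + h_m f_m$ by $cx^\gamma$ gives
\[
cx^\gamma f \;=\; c c_0\, x^{\gamma+\alpha} f_i \;+\; cx^\gamma (h_i - c_0 x^\alpha)\, f_i \;+\; \sum_{k>i} cx^\gamma h_k\, f_k.
\]
Every summand except the first already satisfies the strictly-lower condition: the middle term, viewed as a polynomial multiple of $\p({\mathcal F}_i)$, has signature $x^\gamma \lpp(h_i - c_0 x^\alpha)\, \eb_i \prec x^{\gamma+\alpha}\eb_i$, and each tail summand, viewed as a polynomial multiple of ${\mathcal F}_k$ with $k>i$, has signature in position $k>i$ and is therefore $\prec x^{\gamma+\alpha}\eb_i$ by POT. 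Only the head term $c c_0 x^{\gamma+\alpha} f_i$ is obstructive.

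This is where the comparable witness ${\mathcal G} = (x^\beta \eb_j, g, k_g)\in B_{end}$ enters, with $\lpp(g) \mid x^{\gamma+\alpha}$ and $j > i$. Set $x^\delta = x^{\gamma+\alpha}/\lpp(g)$ and $\hat g = g - \lc(g)\lpp(g)$, so $x^{\gamma+\alpha} = \lc(g)^{-1}(x^\delta g - x^\delta \hat g)$. Substituting,
\[
c c_0 x^{\gamma+\alpha} f_i \;=\; \tfrac{c c_0}{\lc(g)}\, f_i\, x^\delta \cdot g \;-\; \tfrac{c c_0}{\lc(g)}\, x^\delta \hat g \cdot f_i,
\]
displayed as $p\cdot\p({\mathcal G}) + q\cdot\p({\mathcal F}_i)$ with both ${\mathcal G}, {\mathcal F}_i \in B_{end}$. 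The first summand carries signature $\lpp(f_i)\, x^{\delta+\beta} \eb_j \prec x^{\gamma+\alpha}\eb_i$ purely because $j > i$; the second has signature $x^\delta \lpp(\hat g)\,\eb_i$, and $\lpp(\hat g) \prec \lpp(g)$ gives $x^\delta \lpp(\hat g) \prec x^{\gamma+\alpha}$, again strictly below. Assembling the pieces yields the desired strictly lower representation of $cx^\gamma{\mathcal F}$ over $B_{end}$.

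The main obstacle is purely book-keeping: one must extend the paper's monomial convention $\s(cx^\gamma{\mathcal F}) = x^\gamma \s({\mathcal F})$ to general polynomial multipliers via $\s(p\cdot{\mathcal G}) = \lpp(p)\cdot\s({\mathcal G})$, after which the strict-lower comparisons reduce to the two observations already used above. Conceptually, the comparable hypothesis $j > i$ does all the real work — it lets us trade the one forbidden $\eb_i$-signature term for a manifestly lower $\eb_j$-signature term plus an $\eb_i$-remainder of strictly smaller leading power product; everything else is for free.
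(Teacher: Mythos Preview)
Your proof is correct and follows the same core idea as the paper: both isolate the single problematic head term $x^{\gamma+\alpha} f_i$ and kill it with the principal syzygy $g f_i - f_i g = 0$ supplied by the comparable witness ${\mathcal G}$, splitting it into a $\p({\mathcal G})$-summand (strictly lower because $j>i$) and a $\p({\mathcal F}_i)$-summand (strictly lower because $\lpp(g-\lm(g)) \prec \lpp(g)$).

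The one noteworthy difference is how the initial expansion is obtained. The paper invokes its separately proved signature Corollary (established by induction over the algorithm's loops) to write $\p(cx^\gamma{\mathcal F}) = \bar c\,x^{\gamma+\alpha} f_i + \sum p_k \p({\mathcal H}_k)$ with ${\mathcal H}_k \in B_{end}$; you instead unpack the module element $\fb$ directly from the definition of labeled polynomial and expand against the initial ${\mathcal F}_k$'s, which already lie in $B_{end}$. Your route is a shade more elementary --- it does not appeal to the inductive signature lemma --- while the paper's route isolates that lemma as a reusable tool (it is invoked again verbatim in the rewritable case). The syzygy substitution, which is the real content, is identical in both arguments.
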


\begin{prop}[rewritable] \label{prop-rew}
Let ${\mathcal F}\in B_{end}$ be a labeled polynomial and
$cx^\gamma$ a non-zero monomial in $X$. If labeled polynomial
$cx^\gamma{\mathcal F}$ is {\bf rewritable} by $B_{end}$, then
$cx^\gamma{\mathcal F}$ has a strictly lower representation w.r.t.
set $B_{end}$.
\end{prop}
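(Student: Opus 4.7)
The plan is to unwind the F5-reduction history of the rewriter $\mathcal{G}$ and read off an explicit strictly lower representation of $cx^\gamma\mathcal{F}$. By rewritability we are handed a witness $\mathcal{G} = (x^\beta\eb_i, g, k_g) \in B_{end}$ with $x^\beta \mid x^{\gamma+\alpha}$ and $k_f < k_g$; setting $\delta := \gamma + \alpha - \beta$, the scaled labeled polynomial $cx^\delta\mathcal{G}$ carries the signature $x^{\gamma+\alpha}\eb_i = \s(cx^\gamma\mathcal{F})$ together with the strictly larger number $k_g > k_f$, so that $cx^\gamma\mathcal{F} \sul cx^\delta\mathcal{G}$ and $cx^\delta\mathcal{G}$ already qualifies as a legal term in a strictly lower representation.

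The first structural step is to rule out the possibility that $\mathcal{G}$ is one of the initial labeled polynomials $\mathcal{F}_1,\ldots,\mathcal{F}_m$. Among these only $\mathcal{F}_i$ has signature index $i$, and since $\mathcal{F}$ itself has signature index $i$ and no labeled polynomial of that index can be generated before $\mathcal{F}_i$ is initialised, the assumption $\mathcal{G} = \mathcal{F}_i$ would force $k_f \ge i = k_g$, contradicting $k_f < k_g$. Therefore $\mathcal{G}$ was produced by F5-reducing an s-polynomial $\mathcal{SP}$ coming from some critical pair $[\mathcal{P},\mathcal{Q}] = (u,\mathcal{P},v,\mathcal{Q})$ that passed both criteria, with $u\mathcal{P} \sul v\mathcal{Q}$ and $u\,\s(\mathcal{P}) = x^\beta\eb_i$.

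Next I would invoke Proposition \ref{prop-reduce} (F5-reduction property) to write $\mathcal{SP} = \mathcal{G} + \sum_j q_j\mathcal{H}_j$ with $\s(q_j\mathcal{H}_j) \prs x^\beta\eb_i$ strictly. Using $\mathcal{SP} = u\mathcal{P} - v\mathcal{Q}$ and scaling the resulting polynomial identity by $cx^\delta$ yields
$$ cx^\delta u\,\p(\mathcal{P}) \;=\; cx^\delta g \;+\; cx^\delta v\,\p(\mathcal{Q}) \;+\; \sum_j cx^\delta q_j\,\p(\mathcal{H}_j). $$
A term-by-term check shows that every labeled polynomial on the right-hand side sits strictly below $cx^\gamma\mathcal{F}$ in the order $\prl$: the term $cx^\delta\mathcal{G}$ by its higher number at identical signature, and the remaining terms by the strictly smaller signatures inherited from $v\mathcal{Q} \prl u\mathcal{P}$ and from the F5-reduction bound on the $q_j\mathcal{H}_j$.

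What I expect to be the main obstacle is matching the left-hand side of this identity with $cx^\gamma f = \p(cx^\gamma\mathcal{F})$. The match is immediate if $\mathcal{P} = \mathcal{F}$, for then $u = x^{\beta-\alpha}$ and $cx^\delta u = cx^\gamma$; in general, however, $\mathcal{P}$ need not coincide with $\mathcal{F}$. To deal with this, I would pick $\mathcal{G}$ so as to maximise $k_g$ among all rewriters of $cx^\gamma\mathcal{F}$, and then invoke the fact that $u\mathcal{P}$ was not rewritable at the moment the pair $[\mathcal{P},\mathcal{Q}]$ was processed---this forbids any labeled polynomial in the then-current $B$ whose signature divides $x^\beta\eb_i$ and whose number exceeds $k_p$. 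Combined with the divisibility $x^\beta \mid x^{\gamma+\alpha}$ and a careful bookkeeping of signature indices, this non-rewritability constraint should either force $\mathcal{P} = \mathcal{F}$ outright or expose a rewriter strictly below the current one, permitting a descent on $\prl$ that terminates in the matched case and supplies the desired strictly lower representation.
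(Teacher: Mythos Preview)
Your first two structural steps are fine: the rewriter $\mathcal{G}$ cannot be initial, and unwinding one F5-reduction step gives a strictly lower representation of $cx^\delta u\,\mathcal{P}$ rather than of $cx^\gamma\mathcal{F}$. You correctly flag this mismatch as the main obstacle. The gap is that your proposed fix does not close it.

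Take the problematic case $k_p < k_f$. Since $k_f < k_g$, the labeled polynomial $\mathcal{F}$ was already in $B$ when the pair $[\mathcal{P},\mathcal{Q}]$ was processed. Non-rewritability of $u\mathcal{P}$ at that moment only says that no element of signature dividing $x^\beta\eb_i$ with number exceeding $k_p$ was present; this forces $x^\alpha \nmid x^\beta$, but nothing more. You are left with $cx^\delta u\,\mathcal{P}$ carrying the same signature $x^{\gamma+\alpha}\eb_i$ as $cx^\gamma\mathcal{F}$ but a \emph{smaller} number $k_p$, hence $cx^\delta u\,\mathcal{P} \sul cx^\gamma\mathcal{F}$: it is not a legal term in a strictly lower representation of $cx^\gamma\mathcal{F}$, and there is no evident rewriter with number larger than $k_g$ to descend to (you chose $k_g$ maximal). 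The ``descent on $\prl$'' you invoke has no well-founded quantity to descend on here, and the argument stalls.

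The paper sidesteps this by never trying to match $\mathcal{P}$ with $\mathcal{F}$. Instead it proves a \emph{signature lemma} (Lemma~\ref{lem-signature} / Corollary~\ref{cor-signature}): every $\mathcal{H}=(x^\mu\eb_j,h,\cdot)\in B_{end}$ satisfies $h = c\,x^\mu f_j + (\text{terms with strictly smaller signature from }B_{end})$. This is shown by induction on the loop count, i.e.\ by unwinding the \emph{entire} F5-reduction history back to the initial polynomials, not just one step. Applying it once to $cx^\gamma\mathcal{F}$ and once to $x^\delta\mathcal{G}$ expresses both as a nonzero multiple of $x^{\gamma+\alpha}f_j$ plus strictly-smaller-signature terms; eliminating the common $x^{\gamma+\alpha}f_j$ yields $\p(cx^\gamma\mathcal{F})$ as a combination of $x^\delta\mathcal{G}$ (strictly lower by its larger number) and strictly-smaller-signature terms. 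Your one-step unwinding is essentially the inductive step of that lemma; what is missing is the full induction that anchors both $\mathcal{F}$ and $\mathcal{G}$ to the common initial polynomial $f_j$.
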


Next, the key lemma connect the {\em strictly lower
representation} and {\em $t$-representation}. We say {\em all the
lower critical pairs of ${\mathcal F}$ have $t$-representations
w.r.t. set $B$}, where ${\mathcal F}$ is a labeled polynomial and
$B$ is a set of labeled polynomials, if for all critical pairs
$[{\mathcal P}, {\mathcal Q}]=(r, {\mathcal P}, t, {\mathcal Q})$
such that $r{\mathcal P}\prl {\mathcal F}$ where ${\mathcal
P},{\mathcal Q}\in B$, the critical pair $[{\mathcal P}, {\mathcal
Q}]$ always has a $t$-representation w.r.t. set $B$.

\begin{lemma}[key lemma] \label{lem-key}
Let ${\mathcal F}\in L[X]$ be a labeled polynomial. If
\begin{enumerate}

\item labeled polynomial ${\mathcal F}$ has a strictly lower
representation w.r.t. set $B_{end}$, and

\item all the lower critical pairs of ${\mathcal F}$ have
$t$-representations w.r.t. set $B_{end}$.
\end{enumerate}
Then labeled polynomial ${\mathcal F}$ has a $t$-representation
w.r.t. set $B_{end}$ where $t=\lpp({\mathcal F})$. Furthermore,
there exists a labeled polynomial ${\mathcal H}\in B_{end}$ such
that: $\lpp({\mathcal H})\mid \lpp({\mathcal F})$ and ${\mathcal
F}\sul x^\lambda{\mathcal H}$ where $x^\lambda=\lpp({\mathcal
F})/\lpp({\mathcal H})$.
\end{lemma}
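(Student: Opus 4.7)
The plan is a minimal-counterexample argument driven by Abel summation, adapted to the signature bookkeeping of labeled polynomials. By hypothesis~1 the set of strictly lower representations of $\mathcal{F}$ is non-empty, so I would pick one
$$\p(\mathcal{F}) = \sum_i p_i \p(\mathcal{G}_i), \q \mathcal{G}_i \in B_{end}, \q \mathcal{F} \sul p_i \mathcal{G}_i,$$
minimizing $T := \max_i \lpp(p_i \mathcal{G}_i)$ (and, secondarily, the number of summands attaining $T$). After expanding each $p_i$ into monomials $c_{i,k} x^{\gamma_{i,k}}$, every new summand still satisfies $c_{i,k} x^{\gamma_{i,k}} \mathcal{G}_i \unlhd p_i \mathcal{G}_i \prl \mathcal{F}$ (the scalar and the power product do not raise the signature above $\lpp(p_i)\s(\mathcal{G}_i)$), so I may assume each $p_i$ is a single monomial $c_i x^{\gamma_i}$.

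If $T = \lpp(\mathcal{F})$ then this is already a $t$-representation with $t = \lpp(\mathcal{F})$, since $\mathcal{F} \sul p_i \mathcal{G}_i$ trivially implies $\mathcal{F} \unrhd p_i \mathcal{G}_i$ and $\lpp(p_i \mathcal{G}_i) \preceq T$. For the second assertion, the leading monomial of $\p(\mathcal{F})$ must be contributed by some summand with $\lpp(c_i x^{\gamma_i} \mathcal{G}_i) = \lpp(\mathcal{F})$; setting $\mathcal{H} := \mathcal{G}_i$ and $x^\lambda := x^{\gamma_i} = \lpp(\mathcal{F})/\lpp(\mathcal{H})$, the strict inequality $\mathcal{F} \sul x^\lambda \mathcal{H}$ is inherited from the strictly lower representation because multiplying by the scalar $c_i$ affects neither signature nor number.

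It remains to rule out $T \succ \lpp(\mathcal{F})$. In that case the leading-$T$ monomials must cancel, so on the index set $I = \{i_1, \ldots, i_\ell\}$ of summands attaining $T$ one has $\sum_k c_{i_k} \lc(\mathcal{G}_{i_k}) = 0$ and $\ell \ge 2$ (the $\mathcal{G}_{i_k}$ may be assumed distinct, combining duplicates). Setting $w_k := x^{\gamma_{i_k}}\p(\mathcal{G}_{i_k})/\lc(\mathcal{G}_{i_k})$ and $A_k := \sum_{j \le k} c_{i_j}\lc(\mathcal{G}_{i_j})$, the vanishing of $A_\ell$ lets Abel summation rewrite
$$\sum_{k=1}^{\ell} c_{i_k} x^{\gamma_{i_k}}\p(\mathcal{G}_{i_k}) = \sum_{k=1}^{\ell-1} A_k (w_k - w_{k+1}),$$
with $w_k - w_{k+1} = \pm x^{\mu_k}\p(\sp(\mathcal{G}_{i_k}, \mathcal{G}_{i_{k+1}}))$ and $x^{\mu_k}\lcm(\lpp(\mathcal{G}_{i_k}), \lpp(\mathcal{G}_{i_{k+1}})) = T$. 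The signature side $u_k \mathcal{G}_{i_k}$ of the critical pair satisfies $x^{\mu_k} u_k \s(\mathcal{G}_{i_k}) = x^{\gamma_{i_k}}\s(\mathcal{G}_{i_k})$, hence $u_k \mathcal{G}_{i_k} \unlhd x^{\gamma_{i_k}}\mathcal{G}_{i_k} \prl \mathcal{F}$, i.e.\ $[\mathcal{G}_{i_k}, \mathcal{G}_{i_{k+1}}]$ is a lower critical pair of $\mathcal{F}$. Hypothesis~2 then supplies a $t$-representation of each $\sp(\mathcal{G}_{i_k}, \mathcal{G}_{i_{k+1}})$ whose summands $q_{k,j}\mathcal{H}_{k,j}$ obey $q_{k,j}\mathcal{H}_{k,j} \unlhd \sp$ and $\lpp(q_{k,j}\mathcal{H}_{k,j}) \prec \lcm(\lpp(\mathcal{G}_{i_k}), \lpp(\mathcal{G}_{i_{k+1}}))$; after multiplication by $A_k x^{\mu_k}$ each new summand has $\lpp \prec T$ and, via the chain $x^{\mu_k} q_{k,j}\mathcal{H}_{k,j} \unlhd x^{\mu_k} u_k \mathcal{G}_{i_k} \bowtie x^{\gamma_{i_k}}\mathcal{G}_{i_k} \prl \mathcal{F}$, signature still strictly $\prl \mathcal{F}$. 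Together with the unchanged terms for $i \notin I$, this yields a strictly lower representation of $\mathcal{F}$ with maximum $\lpp$ strictly below $T$, contradicting the minimality of $T$.

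The main obstacle is the signature bookkeeping in this last step: through the monomial expansion, the Abel decomposition, the substitution of inner $t$-representations for each $\sp$, and the outer multiplication by $x^{\mu_k}$, one must verify that every resulting summand remains strictly below $\mathcal{F}$ in the $\prl$ order, so that the new sum is genuinely a strictly lower representation. The key facts enabling this are the identity $x^{\mu_k} u_k \s(\mathcal{G}_{i_k}) = x^{\gamma_{i_k}}\s(\mathcal{G}_{i_k})$ (which lets the inner $t$-representation lift correctly after multiplication by $x^{\mu_k}$) and the invariance of both signature and number under scalar multiplication, combined with the transitivity of $\prl$ and $\unlhd$ on the lexicographic structure $(\s, \n)$.
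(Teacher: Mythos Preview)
Your argument is correct and follows essentially the same route as the paper's own proof: pick a strictly lower representation minimizing the top power product $T$ (the paper's $x^\delta$), argue by contradiction that $T\succ\lpp(\mathcal F)$ is impossible by rewriting the cancelling top part as a combination of s-polynomials, invoke hypothesis~2 on those (lower) critical pairs, and substitute to drop $T$. Your Abel-summation step is just an explicit version of the paper's one-line assertion that ``the first sum must be a linear combination of s-polynomials,'' and your careful signature bookkeeping (the chain $x^{\mu_k}q_{k,j}\mathcal H_{k,j}\unlhd x^{\mu_k}u_k\mathcal G_{i_k}\bowtie x^{\gamma_{i_k}}\mathcal G_{i_k}\prl\mathcal F$) spells out what the paper compresses into ``because expression~(\ref{eq_1}) is a strictly lower representation of $\mathcal F$.'' The secondary minimization on the number of top summands is harmless but unnecessary, since your substitution already eliminates \emph{all} summands at level $T$.
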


Based on Lemma \ref{lem-key} (key lemma), it is esay to obtain the
following two propositions. Please pay attention to the position of
the labeled polynomial ${\mathcal F}$ in the critical pair of each
proposition.

\begin{prop}[left] \label{prop-left}
Let $[{\mathcal F}, {\mathcal G}]=(u, {\mathcal F}, v, {\mathcal
G})$ be a critical pair, where ${\mathcal F}, {\mathcal G}\in
B_{end}$ are labeled polynomials and $u, v$ are monomials in $X$
such that $u\lm({\mathcal F})=v\lm({\mathcal
G})=\lcm(\lpp({\mathcal F})$, $\lpp({\mathcal G}))$. Then the
critical pair $[{\mathcal F}, {\mathcal G}]$ has a
$t$-representation w.r.t. set $B_{end}$, if
\begin{enumerate}
\item labeled polynomial $u{\mathcal F}$ has a strictly lower
representation w.r.t. set $B_{end}$, and

\item all the lower critical pairs of $[{\mathcal F}, {\mathcal
G}]$ have $t$-representations w.r.t. set $B_{end}$.
\end{enumerate}
\end{prop}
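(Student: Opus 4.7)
The plan is to apply the key lemma (Lemma~4.11) directly to the labeled s-polynomial ${\mathcal S}:=u{\mathcal F}-v{\mathcal G}$ itself. The crucial setup observation is that, since $u{\mathcal F}\sul v{\mathcal G}$ by the very definition of a critical pair, the signature and number of ${\mathcal S}$ are inherited from $u{\mathcal F}$, so ${\mathcal S}\bowtie u{\mathcal F}$; meanwhile the leading terms of $u{\mathcal F}$ and $v{\mathcal G}$ cancel, so $\lpp({\mathcal S})\prec u\lpp({\mathcal F})=\lcm(\lpp({\mathcal F}),\lpp({\mathcal G}))$. Together these two facts will turn a $t$-representation of ${\mathcal S}$ with $t=\lpp({\mathcal S})$ into a $t$-representation of the critical pair $[{\mathcal F},{\mathcal G}]$ of the required form.

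To invoke the key lemma on ${\mathcal S}$ I must verify both of its hypotheses. For the strictly-lower-representation hypothesis, I would take the strictly lower representation of $u{\mathcal F}$ granted by hypothesis~(1), say $\p(u{\mathcal F})=\sum_j q_j\p({\mathcal H}_j)$ with ${\mathcal H}_j\in B_{end}$ and $u{\mathcal F}\sul q_j{\mathcal H}_j$, subtract $\p(v{\mathcal G})$ from both sides, and keep $(-v){\mathcal G}$ as an extra summand on the right. Because ${\mathcal S}\bowtie u{\mathcal F}$, the inequalities $u{\mathcal F}\sul q_j{\mathcal H}_j$ and $u{\mathcal F}\sul v{\mathcal G}$ transfer verbatim to ${\mathcal S}\sul q_j{\mathcal H}_j$ and ${\mathcal S}\sul (-v){\mathcal G}$, producing a strictly lower representation of ${\mathcal S}$ w.r.t. $B_{end}$. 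For the lower-critical-pairs hypothesis, the relation ${\mathcal S}\bowtie u{\mathcal F}$ makes the condition $r{\mathcal P}\prl {\mathcal S}$ equivalent to $r{\mathcal P}\prl u{\mathcal F}$, which is a strictly stronger requirement than the critical-pair comparison $[{\mathcal P},{\mathcal Q}]\prl [{\mathcal F},{\mathcal G}]$; hence the lower critical pairs of ${\mathcal S}$ sit inside the lower critical pairs of $[{\mathcal F},{\mathcal G}]$, and hypothesis~(2) of the proposition covers them.

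The key lemma then delivers a $t$-representation of ${\mathcal S}$ w.r.t. $B_{end}$ with $t=\lpp({\mathcal S})$: an expression $\p({\mathcal S})=\sum_i p_i\p({\mathcal G}_i)$ with ${\mathcal G}_i\in B_{end}$, ${\mathcal S}\unrhd p_i{\mathcal G}_i$, and $\lpp({\mathcal S})\succeq \lpp(p_i{\mathcal G}_i)$. Using ${\mathcal S}\bowtie u{\mathcal F}$ once more, the relation ${\mathcal S}\unrhd p_i{\mathcal G}_i$ coincides with the $u{\mathcal F}\unrhd p_i{\mathcal G}_i$ demanded by the definition of a $t$-representation of a critical pair, while $\lpp({\mathcal S})\prec \lcm(\lpp({\mathcal F}),\lpp({\mathcal G}))$ is exactly the cancellation bound recorded at the outset, so the same identity serves as a $t$-representation of $[{\mathcal F},{\mathcal G}]$. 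The one piece requiring care is the consistent bookkeeping with the relation $\bowtie$ when pivoting between the s-polynomial viewpoint of ${\mathcal S}$ and the top-element viewpoint of $u{\mathcal F}$; I expect this to be the principal technical nuisance rather than a real obstacle, because all the conceptual heavy lifting has already been done by the key lemma.
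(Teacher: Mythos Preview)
Your proposal is correct and follows essentially the same route as the paper: build a strictly lower representation of the s-polynomial $\sp({\mathcal F},{\mathcal G})$ by appending $-v{\mathcal G}$ to the given strictly lower representation of $u{\mathcal F}$, use $\sp({\mathcal F},{\mathcal G})\bowtie u{\mathcal F}$ to transfer the $\sul$ inequalities, and then invoke the key lemma. Your explicit verification that the lower critical pairs of ${\mathcal S}$ are contained in those of $[{\mathcal F},{\mathcal G}]$ is in fact slightly more careful than the paper, which passes over this point in a single clause.
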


\begin{prop}[right] \label{prop-right}
Let $[{\mathcal G}, {\mathcal F}]=(v, {\mathcal G}, u, {\mathcal
F})$ be a critical pair, where ${\mathcal G}, {\mathcal F}\in
B_{end}$ are labeled polynomials and $v, u$ are monomials in $X$
such that $v\lm({\mathcal G})=u\lm({\mathcal F})=\lcm(\lpp({\mathcal
G})$, $\lpp({\mathcal F}))$. Then the critical pair $[{\mathcal G},
{\mathcal F}]$ has a $t$-representation w.r.t. set $B_{end}$, if
\begin{enumerate}
\item labeled polynomial $u{\mathcal F}$ has a strictly lower
representation w.r.t. set $B_{end}$, and

\item all the lower critical pairs of $[{\mathcal G}, {\mathcal
F}]$ have $t$-representations w.r.t. set $B_{end}$.
\end{enumerate}
\end{prop}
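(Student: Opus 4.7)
The plan is to adapt the proof of Proposition 4.12 (left); the essential asymmetry is that here the strictly lower representation concerns the lower-signature member $u\mathcal{F}$ of the critical pair $[\mathcal{G},\mathcal{F}]=(v,\mathcal{G},u,\mathcal{F})$ (so $v\mathcal{G}\sul u\mathcal{F}$). Set $L=\lcm(\lpp(\mathcal{G}),\lpp(\mathcal{F}))=\lpp(v\mathcal{G})=\lpp(u\mathcal{F})$; since the s-polynomial $\sp(\mathcal{G},\mathcal{F})=v\mathcal{G}-u\mathcal{F}$ has signature $\s(v\mathcal{G})$ and its leading terms cancel, the target is a $t$-representation with $t\prec L$.

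First I invoke Lemma 4.11 (key lemma) with $u\mathcal{F}$ in place of $\mathcal{F}$. Hypothesis 1 of the lemma is exactly hypothesis 1 of the proposition, and hypothesis 2 is inherited from hypothesis 2 of the proposition because every lower critical pair of $u\mathcal{F}$ (first entry $\prl u\mathcal{F}$) is automatically a lower critical pair of $[\mathcal{G},\mathcal{F}]$ via the transitive chain $\prl u\mathcal{F}\prl v\mathcal{G}$. Lemma 4.11 then yields a $t$-representation $\p(u\mathcal{F})=\sum_i p_i\,\p(\mathcal{G}_i)$ with $u\mathcal{F}\unrhd p_i\mathcal{G}_i$ and $\lpp(p_i\mathcal{G}_i)\preceq L$. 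Substituting this into $\sp(\mathcal{G},\mathcal{F})$, splitting $p_i=\lm(p_i)+(p_i-\lm(p_i))$ for the indices $i\in I$ with $\lpp(p_i\mathcal{G}_i)=L$, and applying Buchberger's standard telescoping identity to the top-level cancellation at $L$, I rewrite the $L$-level part as a linear combination of monomial multiples $x^{\mu_i}\sp(\mathcal{G},\mathcal{G}_i)$ (or $\sp(\mathcal{G}_i,\mathcal{G})$, whichever respects signature ordering) with $x^{\mu_i}L_i=L$ and $L_i=\lcm(\lpp(\mathcal{G}),\lpp(\mathcal{G}_i))$; every other summand already satisfies $\lpp\prec L$.

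The crux is verifying that each such critical pair of $\mathcal{G}$ with $\mathcal{G}_i$ is strictly $\prl[\mathcal{G},\mathcal{F}]$ in the pair order, so hypothesis 2 of the proposition supplies a $t$-representation of its s-polynomial. With cofactors $a_i\lpp(\mathcal{G})=b_i\lpp(\mathcal{G}_i)=L_i$, the identity $(L/L_i)b_i=\lpp(p_i)$ (valid because $\lpp(p_i\mathcal{G}_i)=L$) gives $b_i\,\s(\mathcal{G}_i)\preceq\s(p_i\mathcal{G}_i)\preceq\s(u\mathcal{F})\prec\s(v\mathcal{G})$ and $a_i\,\s(\mathcal{G})\preceq\s(v\mathcal{G})$, the latter strictly whenever $L_i\prec L$; in that generic case the first entry of the critical pair lies strictly below $v\mathcal{G}$. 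The main obstacle and subtlest point is the boundary case $L_i=L$: then $a_i=v$ and $b_i=\lpp(p_i)$, the first entries tie at $v\mathcal{G}$, and the conclusion $[\mathcal{G},\mathcal{G}_i]\prl[\mathcal{G},\mathcal{F}]$ must be drawn from the second-entry tie-break of $\prl$, using $b_i\mathcal{G}_i\prl u\mathcal{F}$ together with the number comparison inherited from $u\mathcal{F}\unrhd p_i\mathcal{G}_i$. Patching the resulting lower $t$-representations together with the $\lpp\prec L$ remainder yields the desired $t$-representation of $\sp(\mathcal{G},\mathcal{F})$ with $t\prec L$, completing the proof.
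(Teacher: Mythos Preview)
Your approach is workable but takes a genuinely longer route than the paper. The paper uses only the \emph{Furthermore} clause of Lemma~4.11: it extracts a single ${\mathcal H}\in B_{end}$ with $\lpp({\mathcal H})\mid L$ and $u{\mathcal F}\sul w{\mathcal H}$, then applies the triangle identity
\[
\sp({\mathcal G},{\mathcal F}) = (v{\mathcal G}-w{\mathcal H}) - (u{\mathcal F}-w{\mathcal H}) = \gcd(v,w)\,\sp({\mathcal G},{\mathcal H}) - \gcd(u,w)\,\sp({\mathcal F},{\mathcal H}),
\]
so that only the two critical pairs $[{\mathcal G},{\mathcal H}]$ and $[{\mathcal F},{\mathcal H}]$ need to be checked lower than $[{\mathcal G},{\mathcal F}]$; both checks are immediate from $v{\mathcal G}\sul u{\mathcal F}\sul w{\mathcal H}$. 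You instead feed the full $t$-representation of $u{\mathcal F}$ into a Buchberger telescoping, producing a whole family of pairs $[{\mathcal G},{\mathcal G}_i]$ that must each be verified lower. This does work (pivoting the cancellation at $L$ against $v{\mathcal G}$ gives exactly the $\sp({\mathcal G},{\mathcal G}_i)$ you describe), but it is heavier and essentially re-derives the ``Furthermore'' clause inside the argument.

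There is also a small imprecision in your boundary case $L_i=L$. You invoke only $u{\mathcal F}\unrhd p_i{\mathcal G}_i$, but $\unrhd$ allows $\bowtie$, which would force $\n({\mathcal F})=\n({\mathcal G}_i)$ and hence ${\mathcal G}_i={\mathcal F}$ (numbers in $B_{end}$ are distinct), collapsing $[{\mathcal G},{\mathcal G}_i]$ to $[{\mathcal G},{\mathcal F}]$ itself. To close this you need the \emph{strict} inequality $u{\mathcal F}\sul p_i{\mathcal G}_i$; this is in fact what the proof of Lemma~4.11 produces (the minimal representation~(\ref{eq_5}) is still strictly lower), but it is not part of the lemma's stated $t$-representation conclusion, so you should cite the proof or the ``Furthermore'' clause rather than the bare $\unrhd$. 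Once that is done your argument goes through; the paper's version simply avoids the whole telescoping and boundary analysis by working with a single ${\mathcal H}$.
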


Now, combined with Propositions \ref{prop-syz} (comparable),
\ref{prop-rew} (rewritable), \ref{prop-left} (left) and
\ref{prop-right} (right), Theorem \ref{thm-secondkind} (second kind)
is proved.

\subsection{Proofs of Lemmas and Propositions} \label{subsec-lempropproof}

In this subsection, we list the detail proofs for the lemmas and
propositions appearing in last subsection.

\begin{prop36}
If a critical pair is operated during the loops, i.e. it is not
detected by the two criteria, then it has a $t$-representation
w.r.t. set $B_{end}$.
\end{prop36}

\begin{proof}
Let $cp=[{\mathcal F}, {\mathcal G}]$ be a critical pair which is
not rejected by the two criteria. Assume $cp$ is being selected in
the $l$th loop (from set $CP^{(l-1)}$) and $B^{(l-1)}$ is the
labeled polynomial set before the $l$th loop begins.

Since critical pair $cp$ is not rejected by two criteria, its
s-polynomial is calculated and F5-reduces by set ${B^{(l-1)}}$ to a
new labeled polynomial ${\mathcal P}$, i.e. $\sp({\mathcal F},
{\mathcal G})\Longrightarrow_{B^{(l-1)}}^* {\mathcal P}$. Next, only
two possibilities may happen to the labeled polynomial ${\mathcal
P}$.
\begin{enumerate}

\item If $\p({\mathcal P})=0$, it is easy to check that the
s-polynomial $\sp({\mathcal F}, {\mathcal G})$ of $[{\mathcal F},
{\mathcal G}]$ has a $t$-representation w.r.t. set $B^{(l-1)}$ where
$t=\lpp(\sp({\mathcal F}, {\mathcal G}))$ by the definition of
F5-reduction and hence $t\prec \lcm(\lpp({\mathcal F}),
\lpp({\mathcal G}))$.

\item If $\p({\mathcal P})\not=0$, then the {\em number} of
${\mathcal P}$ is updated and denote this new labeled polynomial as
${\mathcal P}'$. Since signature $\s(\sp({\mathcal F},{\mathcal
G}))= \s({\mathcal P})= \s({\mathcal P}')$ and the {\em number}
$\n(\sp({\mathcal F},{\mathcal G}))=\n({\mathcal P})< \n({\mathcal
P}')$, then labeled polynomial $\sp({\mathcal F},{\mathcal G})\sul
{\mathcal P}'$ by the definition of ``$\sul$". Therefore, the
s-polynomial $\sp({\mathcal F},{\mathcal G})$ has a
$t$-representation w.r.t. set $B^{(l-1)} \cup \{{\mathcal P}'\}$
where $t=\lpp(\sp({\mathcal F}, {\mathcal G}))\prec
\lcm(\lpp({\mathcal F}),  \lpp({\mathcal G}))$. Notice that set
$B^{(l)}=B^{(l-1)} \cup \{{\mathcal P}'\}$ by the algorithm and both
$B^{(l-1)}, B^{(l)}\subset B_{end}$.
\end{enumerate}
Thus in either of the above cases, the critical pair $[{\mathcal
F},{\mathcal G}]$ has a $t$-representation w.r.t. set $B_{end}$.
\end{proof}

Next, we begin the proofs for Theorem \ref{thm-secondkind} (second
kind). The following lemma reveals the meanings of signatures and it
is also used in the proof of Proposition \ref{prop-syz} (comparable)
and \ref{prop-rew} (rewritable).

\begin{lemma}[signature] \label{lem-signature}
If labeled polynomial ${\mathcal F}=(x^\alpha\eb_j, f, k)\in
B_{end}$, then
$$f=cx^\alpha f_j+p_1\p({\mathcal G}_1)+\cdots+p_s\p({\mathcal G}_s),$$
where $c$ is a non-zero constant in  $\K$, $p_i\in \K[X]$ and
${\mathcal G}_i\in B_{end}$ such that either $p_i=0$ or signature
$\s({\mathcal F}) \sus \s(p_i{\mathcal G}_i)$ for $i=1,\cdots,s$.
\end{lemma}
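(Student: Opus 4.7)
The plan is to induct on the number $\n({\mathcal F}) = k$, exploiting the fact that every labeled polynomial with $k > m$ is produced by F5-reducing the s-polynomial of a critical pair whose two operands already sit in $B_{end}$ with strictly smaller numbers, so the inductive hypothesis is available on them. The base case $k \le m$ is trivial: ${\mathcal F}$ must be one of the initial ${\mathcal F}_k = (\eb_k, f_k, k)$, giving $\alpha = 0$, $j = k$ and the empty-sum representation $f_k = 1 \cdot f_k$.

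For the inductive step, suppose ${\mathcal F}$ is the (renumbered) F5-reduction result of the s-polynomial of some critical pair $[{\mathcal F}', {\mathcal G}'] = (u, {\mathcal F}', v, {\mathcal G}')$ selected in some loop. The F5-reduction property (Proposition \ref{prop-reduce}) expresses $\p({\mathcal F}) = \p(\sp({\mathcal F}', {\mathcal G}')) - \sum_i r_i \p({\mathcal H}_i)$ with each ${\mathcal H}_i \in B_{end}$ and $\s(r_i {\mathcal H}_i) \prec \s({\mathcal F})$, so it suffices to rewrite $u\p({\mathcal F}') - v\p({\mathcal G}')$ in the desired form. Setting $\s({\mathcal F}') = x^{\alpha'} \eb_{j'}$, the identity $\s({\mathcal F}) = \s(u{\mathcal F}')$ forces $j' = j$ and $u x^{\alpha'} = x^\alpha$; applying the hypothesis to ${\mathcal F}'$ and multiplying by $u$ yields $u\p({\mathcal F}') = c' x^\alpha f_j + (\text{terms of signature } \prec \s({\mathcal F}))$. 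Writing $\s({\mathcal G}') = x^\gamma \eb_l$, the hypothesis on ${\mathcal G}'$ gives $\p({\mathcal G}') = c'' x^\gamma f_l + (\text{lower-signature terms})$, and multiplying by $v$ leaves every such term of signature $\preceq \s(v{\mathcal G}') \prec \s(u{\mathcal F}') = \s({\mathcal F})$ by the critical-pair ordering.

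The only subtle point is the stray term $c'' v x^\gamma f_l$ coming from the smaller side of the s-pair: since $l$ need not equal $j$, it cannot be folded into the distinguished term $c x^\alpha f_j$. The fix is to rewrite $f_l = \p({\mathcal F}_l)$ using the initial labeled polynomial ${\mathcal F}_l = (\eb_l, f_l, l)$, which is permanently in $B_{end}$; the signature of $v x^\gamma {\mathcal F}_l$ equals $\s(v{\mathcal G}') \prec \s({\mathcal F})$, so this piece is legitimately absorbed into the $\sum p_i \p({\mathcal G}_i)$ part of the representation. This bookkeeping for the initial-polynomial term from the lower side of the critical pair is the only genuinely delicate step; everything else is a direct combination of the F5-reduction property with the two inductive hypotheses.
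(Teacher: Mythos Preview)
Your approach matches the paper's: induct on when ${\mathcal F}$ entered $B$, unwind the s-polynomial via the F5-reduction property (Proposition~\ref{prop-reduce}), and apply the hypothesis to the operand on the large side to extract the distinguished term $c x^\alpha f_j$. Two remarks.

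First, the detour through the inductive hypothesis on ${\mathcal G}'$ is unnecessary. Since ${\mathcal G}'$ already lies in $B_{end}$ and (once established) $\s(v{\mathcal G}') \prec \s({\mathcal F})$, the term $-v\p({\mathcal G}')$ is itself one of the admissible $p_i\p({\mathcal G}_i)$'s in the target representation. The paper simply leaves it there; your rewriting of $f_l$ as $\p({\mathcal F}_l)$ is correct but creates the ``subtle point'' you then have to undo.

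Second, and more importantly, the strict inequality $\s(v{\mathcal G}') \prec \s(u{\mathcal F}')$ does \emph{not} follow from the critical-pair ordering $u{\mathcal F}' \sul v{\mathcal G}'$ alone: by definition that ordering also permits $\s(u{\mathcal F}') = \s(v{\mathcal G}')$ with $\n({\mathcal F}') < \n({\mathcal G}')$. The missing ingredient is that this critical pair was processed, hence \emph{not} rejected by the Rewritten Criterion, so $u{\mathcal F}'$ is not rewritable by the current $B$. If the two signatures were equal, then $\s({\mathcal G}')$ would divide $\s(u{\mathcal F}')$ and $\n({\mathcal G}') > \n({\mathcal F}')$, making $u{\mathcal F}'$ rewritable by ${\mathcal G}'$ --- a contradiction. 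The paper supplies exactly this argument; you should too, since without it the lower-signature bound on every term coming from the ${\mathcal G}'$ side (including your $v x^\gamma {\mathcal F}_l$) is unjustified.
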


\begin{proof}
We prove this proposition by induction of the loop $l$. Let
$B^{(l-1)}$ be the labeled polynomial set before the $l$th loop
begins and $B^{(l)}$ the labeled polynomial set when the $l$th
loop is over.

First, when $l=0$, consider the set $B^{(0)}=\{(\eb_i, f_i, i)
 \mid  i=1,\cdots,m\} $ where $f_i$'s are initial polynomials.
Clearly, $$f_i=f_i,$$ which shows the proposition holds for the
set $B^{(0)}$.

Second, suppose the proposition holds for the set $B^{(l-1)}$. Then
the next goal is to show the proposition holds for the set
$B^{(l)}$. Denote the critical pair that is selected (from set
$CP^{(l-1)}$) in the $l$th loop as $cp=[{\mathcal Q}_1, {\mathcal
Q}_2]=(u_1, {\mathcal Q}_1, u_2, {\mathcal Q}_2)$, where ${\mathcal
Q}_1, {\mathcal Q}_2\in B^{(l-1)}$ and $u_1, u_2$ are monomials in
$X$ such that $u_1\lm({\mathcal Q}_1)=u_2\lm({\mathcal
Q}_2)=\lcm(\lpp({\mathcal Q}_1), \lpp({\mathcal Q}_2))$.

If critical pair $cp$ meets either of criteria, then this critical
pair is discarded and no labeled polynomial adds to set $B^{(l-1)}$,
which means $B^{(l)}=B^{(l-1)}$. Then the proposition holds for set
$B^{(l)}$.

It remains to show that when the critical pair $cp$ does not meet
either of criteria, the proposition still holds for set $B^{(l)}$.
In this case, the s-polynomial $\sp({\mathcal Q}_1, {\mathcal Q}_2)$
is calculated and F5-reduces to a new labeled polynomial ${\mathcal
P}$ by the set $B^{(l-1)}$, i.e. $\sp({\mathcal Q}_1, {\mathcal
Q}_2)\Longrightarrow_{B^{(l-1)}}^* {\mathcal P}$. Then the {\em
number} of ${\mathcal P}$ is updated and denote this new labeled
polynomial as ${\mathcal P}'$. Clearly, signature $\s({\mathcal
P})=\s({\mathcal P}')$ and polynomial $\p({\mathcal P})=\p({\mathcal
P}')$. Next, $B^{(l)}= B^{(l-1)} \cup \{{\mathcal P}'\}$ by the
algorithm. Therefore, it suffices to prove that the proposition
holds for ${\mathcal P}'$.

By Proposition \ref{prop-reduce} (F5-reduction property), as
s-polynomial $\sp({\mathcal Q}_1, {\mathcal
Q}_2)\Longrightarrow_{B^{(l-1)}}^* {\mathcal P}$, there exist
polynomials $p_1,\cdots,p_s \in \K[X]$ and labled polynomials
${\mathcal G}_1, \cdots, {\mathcal G}_s\in B^{(l-1)}$, such that
${\mathcal P}=\sp({\mathcal Q}_1, {\mathcal Q}_2) +p_1{\mathcal
G}_1+\cdots+p_s{\mathcal G}_s$, where signature $\s({\mathcal
P})=\s(\sp({\mathcal Q}_1, {\mathcal Q}_2))\sus \\ \s(p_i{\mathcal
G}_i)$ for $i=1,\cdots,s$. Notice that s-polynomial $\sp({\mathcal
Q}_1, {\mathcal Q}_2)=u_1{\mathcal Q}_1-u_2{\mathcal Q}_2$. The
above equation equals to
\begin{equation}\label{eq_6}
{\mathcal P}=u_1{\mathcal Q}_1-u_2{\mathcal Q}_2+p_1{\mathcal
G}_1+\cdots+p_s{\mathcal G}_s.\end{equation} The definition of
critical pair $[{\mathcal Q}_1, {\mathcal Q}_2]$ shows
$u_1{\mathcal Q}_1\sul u_2{\mathcal Q}_2$. As labeled polynomial
$u_1{\mathcal Q}_1$ is not rewritable by $B^{(l-1)}$, then
signature $\s(u_1{\mathcal Q}_1)\sus \s(u_2{\mathcal Q}_2)$ holds;
otherwise $u_1{\mathcal Q}_1$ is rewritable by $\{{\mathcal
Q}_2\}\subset B^{(l-1)}$. Therefore, according to the addition of
labeled polynomials, signature $\s({\mathcal P}')=\s({\mathcal
P})= \s(u_1{\mathcal Q}_1)= \s(\sp({\mathcal Q}_1, {\mathcal
Q}_2))\sus \s(p_i{\mathcal G}_i)$ for $i=1,\cdots,s$ and
$\s({\mathcal P}')=\s({\mathcal P})=\s(u_1{\mathcal Q}_1)\sus
\s(u_2{\mathcal Q}_2)$.

Now consider the polynomial part of equation (\ref{eq_6}):
\begin{equation}\label{eq_11}
\p({\mathcal P}')=\p({\mathcal P})=u_1\p({\mathcal
Q}_1)-u_2\p({\mathcal Q}_2)+p_1\p({\mathcal
G}_1)+\cdots+p_s\p({\mathcal G}_s).
\end{equation}
Since labeled polynomial ${\mathcal Q}_1\in B^{(l-1)}$, assume
${\mathcal Q}_1=(x^\gamma \eb_j, q, k')$, by the induction
hypothesis,
$$\p({\mathcal Q}_1)=cx^\gamma f_j+q_1\p({\mathcal H}_1)+\cdots+q_r\p({\mathcal H}_r),$$
where $c$ is a non-zero constant in $\K$, $q_i\in \K[X]$ and
${\mathcal H}_i\in B^{(l-1)}$ such that either $q_i=0$ or signature
$\s({\mathcal Q}_1)\sus \s(q_i{\mathcal H}_i)$ for $i=1,\cdots,r$.
Since $u_1$ is a non-zero monomial in $X$, signature $\s({\mathcal
P}')= \s(u_1{\mathcal Q}_1)=\lpp(u_1)x^\gamma \eb_j$. Substitute the
above expression of $\p({\mathcal Q}_1)$ back into equation
(\ref{eq_11}), then a new representation of $\p({\mathcal P}')$ is
obtained, which shows that the proposition holds for set $B^{(l)}$.
Then the proposition is proved.
\end{proof}

%%%%%%%%%%%%%%%%%%%%%%%%%%6.29%%%%%%%%%%%%%%%%%%%%%%%%%%%%%

The above lemma explains the implications of the signatures, i.e.
for any labeled polynomial ${\mathcal F}=(x^\alpha\eb_j, f, k)\in
B_{end}$, its polynomial $f$ is F5-reduced from the polynomial
$x^\alpha f_j$, where $f_j$ is an initial polynomial. In fact, this
lemma holds more generally.

\begin{cor}[signature]\label{cor-signature}
Let ${\mathcal F}=(x^\alpha\eb_j, f, k)\in B_{end}$ be a labeled
polynomial and $cx^\gamma$ a non-zero monomial in $X$. For the
labeled polynomial $cx^\gamma{\mathcal F}=(x^{\gamma+\alpha}\eb_j,
cx^\gamma f, k)$, then
$$cx^\gamma f=\bar{c}x^{\gamma+\alpha} f_j+p_1\p({\mathcal G}_1)+\cdots+p_s\p({\mathcal G}_s),$$
where $\bar{c}$ is a non-zero constant in  $\K$, $p_i\in \K[X]$
and ${\mathcal G}_i\in B_{end}$ such that either $p_i=0$ or
signature $\s(cx^\gamma{\mathcal F}) \sus \s(p_i{\mathcal G}_i)$
for $i=1,\cdots,s$.
\end{cor}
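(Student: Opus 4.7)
The plan is to reduce the corollary directly to Lemma~\ref{lem-signature} (signature) by multiplying through by the monomial $cx^\gamma$. First I would invoke the signature lemma on ${\mathcal F}=(x^\alpha\eb_j,f,k)\in B_{end}$ to obtain
$$f=\tilde{c}x^\alpha f_j+q_1\p({\mathcal G}_1)+\cdots+q_s\p({\mathcal G}_s),$$
where $\tilde{c}\in\K$ is non-zero, $q_i\in\K[X]$, ${\mathcal G}_i\in B_{end}$, and whenever $q_i\neq 0$ we have $\s({\mathcal F})\sus \s(q_i{\mathcal G}_i)$. Next, multiplying both sides of this identity by the non-zero monomial $cx^\gamma$ yields
$$cx^\gamma f=(c\tilde{c})x^{\gamma+\alpha}f_j+(cx^\gamma q_1)\p({\mathcal G}_1)+\cdots+(cx^\gamma q_s)\p({\mathcal G}_s).$$
Setting $\bar{c}=c\tilde{c}$ (still a non-zero constant in $\K$) and $p_i=cx^\gamma q_i$ gives the representation claimed in the statement.

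The only non-trivial verification is that the signature condition is preserved after multiplication: I must check that whenever $p_i\neq 0$ (equivalently $q_i\neq 0$), we have $\s(cx^\gamma{\mathcal F})\sus \s(p_i{\mathcal G}_i)$. This follows because the POT order $\prec$ on $(\K[X])^m$ is admissible, hence compatible with multiplication by power products: starting from $\s({\mathcal F})=x^\alpha\eb_j \sus \s(q_i{\mathcal G}_i)=\lpp(q_i)x^{\beta_i}\eb_{j_i}$ (writing ${\mathcal G}_i=(x^{\beta_i}\eb_{j_i},g_i,k_i)$), multiplication of both sides by $x^\gamma$ preserves the order, giving
$$\s(cx^\gamma{\mathcal F})=x^{\gamma+\alpha}\eb_j \sus x^\gamma\lpp(q_i)x^{\beta_i}\eb_{j_i}=\s(cx^\gamma q_i\,{\mathcal G}_i)=\s(p_i{\mathcal G}_i),$$
as required. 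Since this is the entire content of the corollary, I expect no genuine obstacle; the whole statement is essentially the observation that the signature bound in Lemma~\ref{lem-signature} scales correctly under multiplication by a monomial, and hence the corollary is immediate.
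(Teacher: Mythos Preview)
Your proposal is correct and matches the paper's treatment: the paper does not give an explicit proof of this corollary, presenting it as the statement that Lemma~\ref{lem-signature} ``holds more generally,'' and your argument---apply the lemma to ${\mathcal F}$, multiply through by $cx^\gamma$, and use the compatibility of the module order with monomial multiplication to preserve the signature inequality---is exactly the immediate derivation the paper has in mind.
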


With a little care, the representations in Lemma \ref{lem-signature}
(signature) and Corollary \ref{cor-signature} (signature) only
constrain the signatures of ${\mathcal F}$ and $p_i{\mathcal G}_i$,
and do not limit the leading power products $\lpp({\mathcal F})$ and
$\lpp(p_i{\mathcal G}_i)$.

Remark that Lemma \ref{lem-signature} (signature) itself is not
sufficient to provide a strictly lower representation for the
labeled polynomial ${\mathcal F}$, since signature $\s({\mathcal
F})=x^\alpha \eb_j= \s(x^\alpha {\mathcal F}_j)$ but the {\em
number} $\n({\mathcal F})\ge \n(x^\alpha {\mathcal F}_j)$, which
means labeled polynomial ${\mathcal F} \unlhd x^\alpha {\mathcal
F}_j$, where ${\mathcal F}_j$ is the labeled polynomial of initial
polynomial $f_j$.

The following two propositions show that if a labeled polynomial is
either {\em comparable} or {\em rewritable} by $B_{end}$, then this
labeled polynomial has a {\em strictly lower representation} w.r.t.
set $B_{end}$.

\begin{prop38}
Let ${\mathcal F}=(x^\alpha\eb_j, f, k_f)\in B_{end}$ be a labeled
polynomial and $cx^\gamma$ a non-zero monomial in $X$. If labeled
polynomial $cx^\gamma{\mathcal F}$ is {\bf comparable} by
$B_{end}$, then $cx^\gamma{\mathcal F}$ has a strictly lower
representation w.r.t. set $B_{end}$.
\end{prop38}

\begin{proof}
Since $cx^\gamma {\mathcal F}=(x^{\gamma+\alpha}\eb_j, cx^\gamma
f, k_f)$ is comparable by $B_{end}$, there exists labeled
polynomial ${\mathcal G}=(x^\beta \eb_l, g, k_g)\in B_{end}$ such
that (1) $\lpp(g) \mid x^{\gamma+\alpha}$ and (2) $\eb_j \succ
\eb_l$. Denote $x^\lambda=x^{\gamma+\alpha}/\lpp(g)$, then
$x^{\gamma+\alpha}=x^\lambda \lpp(g)$. Let ${\mathcal F}_j=(\eb_j,
f_j, j)\in B_{end}$ be the labeled polynomial of initial
polynomial $f_j$. Then the polynomial 2-tuple $(g, -f_j)$ is a
principle syzygy of the 2-tuple vector $(f_j, g)$ in free module
$(\K[X])^2$. That is
$$gf_j-f_jg = 0 \mbox{ and } \lm(g)f_j=f_jg-(g-\lm(g))f_j.$$
As $x^{\gamma+\alpha}=x^\lambda \lpp(g)$, then
\begin{equation}\label{eq_17}
x^{\gamma+\alpha} f_j=x^\lambda \lpp(g)
f_j=\frac{x^\lambda}{\lc(g)} (f_jg-(g-\lm(g))f_j) =
\frac{x^\lambda}{\lc(g)}
f_jg-\frac{x^\lambda}{\lc(g)}(g-\lm(g))f_j$$ $$ = q_1 g+ q_2 f_j=
q_1\p({\mathcal G})+q_2\p({\mathcal F}_j),
\end{equation}
where $q_1=\frac{x^\lambda}{\lc(g)} f_j$ and
$q_2=-\frac{x^\lambda}{\lc(g)} (g-\lm(g))$.

As $\eb_j \succ \eb_l$ holds by hypothesis, then labeled polynomial
$cx^\gamma {\mathcal F} \sul q_1 {\mathcal G}$. Also labeled
polynomial $cx^\gamma {\mathcal F}\sul q_2{\mathcal F}_j$, as the
signature $\s(cx^\gamma {\mathcal F})= x^{\gamma+\alpha}\eb_j =
x^\lambda \lpp(g)\eb_j \succ x^\lambda \lpp(g-\lm(g))\eb_j =
\lpp(q_2)\eb_j = \s(q_2{\mathcal F}_j)$.

Since labeled polynomial ${\mathcal F}\in B_{end}$ and $cx^\gamma$
is a non-zero monomial, Corollary \ref{cor-signature} (signature)
shows
\begin{equation}\label{eq_18}
\p(cx^\gamma {\mathcal F})=cx^\gamma
f=\bar{c}x^{\gamma+\alpha} f_j+p_1\p({\mathcal
H}_1)+\cdots+p_s\p({\mathcal H}_s),
\end{equation}
where $\bar{c}$ is a non-zero constant in  $\K$, $p_i\in \K[X]$
and ${\mathcal H}_i\in B_{end}$ such that either $p_i=0$ or
signature $\s(cx^\gamma{\mathcal F}) \sus \s(p_i{\mathcal H}_i)$
and hence labeled polynomial $cx^\gamma {\mathcal F} \sul
p_i{\mathcal H}_i$ for $i=1,\cdots,s$.

Substitute the expression of polynomial $x^{\gamma+\alpha} f_j$ in
equation (\ref{eq_17}) into (\ref{eq_18}). Then
$$\p(cx^\gamma {\mathcal F})=\bar{c} q_1\p({\mathcal
G})+\bar{c} q_2\p({\mathcal F}_j)+p_1\p({\mathcal
H}_1)+\cdots+p_s\p({\mathcal H}_s),$$ where labeled polynomial
$cx^\gamma {\mathcal F} \sul \bar{c}q_1{\mathcal G}$, $cx^\gamma
{\mathcal F} \sul \bar{c}q_2{\mathcal F}_j$ and $cx^\gamma {\mathcal
F} \sul p_i{\mathcal H}_i$ for $i=1,\cdots,s$. This is already a
strictly lower representation of the labeled polynomial $cx^\gamma
{\mathcal F}$ w.r.t. set $B_{end}$.
\end{proof}

\begin{prop39}
Let ${\mathcal F}=(x^\alpha\eb_j, f, k_f)\in B_{end}$ be a labeled
polynomial and $cx^\gamma$ a non-zero monomial in $X$. If labeled
polynomial $cx^\gamma{\mathcal F}$ is {\bf rewritable} by
$B_{end}$, then $cx^\gamma{\mathcal F}$ has a strictly lower
representation w.r.t. set $B_{end}$.
\end{prop39}

\begin{proof}
Since $cx^\gamma {\mathcal F}=(x^{\gamma+\alpha}\eb_j, cx^\gamma
f, k_f)$ is rewritable by $B_{end}$, there exists labeled
polynomial ${\mathcal G}=(x^\beta \eb_j, g, k_g)\in B_{end}$ such
that (1) $x^\beta \eb_j \mid x^{\gamma+\alpha}\eb_j$ and (2) $k_f
< k_g$. Denote $x^\lambda=x^{\gamma+\alpha-\beta}$.

On one hand, for labeled polynomial $x^\lambda {\mathcal G}$, since
${\mathcal G}\in B_{end}$, according to Corollary
\ref{cor-signature} (signature),
\begin{equation}\label{eq_19}
\p(x^\lambda {\mathcal G})=x^\lambda g=c_1 x^{\lambda+\beta}
f_j+q_1\p({\mathcal R}_1)+\cdots+q_l\p({\mathcal
R}_l),
\end{equation}
where $c_1$ is a non-zero constant in $\K$, $q_i\in \K[X]$ and
${\mathcal R}_i\in B_{end}$ such that either $q_i=0$ or signature
$\s(x^\lambda {\mathcal G}) \sus \s(q_i{\mathcal R}_i)$ for
$i=1,\cdots,l$. As signature $\s(cx^\gamma {\mathcal F}) =
x^{\gamma+\alpha}\eb_j= x^{\lambda+\beta}\eb_j=
\s(x^\lambda{\mathcal G})$, then signature $\s(cx^\gamma {\mathcal
F}) \sus \s(q_i{\mathcal R}_i)$ and hence labeled polynomial
$cx^\gamma {\mathcal F} \sul q_i{\mathcal R}_i$ for
$i=1,\cdots,l$.

On the other hand, since labeled polynomial ${\mathcal F}\in
B_{end}$ and $cx^\gamma$ is a non-zero monomial, the Corollary
\ref{cor-signature} (signature) shows
\begin{equation}\label{eq_20}
\p(cx^\gamma {\mathcal F})=cx^\gamma f =c_2x^{\gamma+\alpha}
f_j+p_1\p({\mathcal H}_1)+\cdots+p_s\p({\mathcal
H}_s),
\end{equation}
where $c_2$ is a non-zero constant in $\K$, $p_i\in \K[X]$ and
${\mathcal H}_i\in B_{end}$ such that either $p_i=0$ or signature
$\s(cx^\gamma {\mathcal F}) \sus \s(p_i{\mathcal H}_i)$ and hence
labeled polynomial $cx^\gamma {\mathcal F} \sul p_i{\mathcal H}_i$
for $i=1,\cdots,s$.

Since $x^{\lambda+\beta}=x^{\gamma+\alpha}$, substitute the
expression of polynomial $x^{\lambda+\beta}f_j$ in equation
(\ref{eq_19}) into (\ref{eq_20}). Then
\begin{equation}\label{eq_21}
\p(cx^\gamma {\mathcal F})=\frac{c_2}{c_1} (\p(x^\lambda {\mathcal
G})-q_1\p({\mathcal R}_1)-\cdots-q_l\p({\mathcal
R}_l))+p_1\p({\mathcal H}_1)+\cdots+p_s\p({\mathcal
H}_s),
\end{equation}
where $c_1$, $c_2$ are non-zero constants in $\K$, labeled
polynomial $cx^\gamma {\mathcal F}\sul q_i{\mathcal R}_i$ for
$i=1,\cdots,l$ and labeled polynomial $cx^\gamma {\mathcal F} \sul
p_i{\mathcal H}_i$ for $i=1,\cdots,s$. Also notice that labeled
polynomial $cx^\gamma {\mathcal F}\sul x^\lambda{\mathcal G}$, since
signature $\s(cx^\gamma {\mathcal F})= x^{\gamma+\alpha}\eb_j=
x^{\lambda+\beta}\eb_j=\s(x^\lambda{\mathcal G})$ and {\em number}
$\n(cx^\gamma {\mathcal F})=k_f < k_g = \n(x^\lambda{\mathcal G})$.
Then (\ref{eq_21}) is a strictly lower representation of the labeled
polynomial $cx^\gamma {\mathcal F}$ w.r.t. set $B_{end}$.
\end{proof}

The following lemma is the key lemma of the whole proofs, which
shows when a labeled polynomial, who has a {\em strictly lower
representation}, has a {\em $t$-representation}.

\begin{lm310}
Let ${\mathcal F}\in L[X]$ be a labeled polynomial. If
\begin{enumerate}

\item labeled polynomial ${\mathcal F}$ has a strictly lower
representation w.r.t. set $B_{end}$, and

\item all the lower critical pairs of ${\mathcal F}$ have
$t$-representations w.r.t. set $B_{end}$.
\end{enumerate}
Then the labeled polynomial ${\mathcal F}$ has a $t$-representation
w.r.t. set $B_{end}$ where $t=\lpp({\mathcal F})$. Furthermore,
there exists a labeled polynomial ${\mathcal H}\in B_{end}$ such
that: $\lpp({\mathcal H})\mid \lpp({\mathcal F})$ and ${\mathcal
F}\sul x^\lambda{\mathcal H}$ where $x^\lambda=\lpp({\mathcal
F})/\lpp({\mathcal H})$.
\end{lm310}

\begin{proof}
Since labeled polynomial ${\mathcal F}$ has a strictly lower
representation w.r.t. set $B_{end}$, by definition of strictly lower
representation, there exist polynomials $p_1,\cdots,p_s \in \K[X]$
and labeled polynomials ${\mathcal G}_1,\cdots,{\mathcal G}_s\in
B_{end}$, such that: $\p({\mathcal F})=p_1\p({\mathcal
G}_1)+\cdots+p_s\p({\mathcal G}_s),$ where labeled polynomial
${\mathcal F} \sul p_i{\mathcal G}_i$ for $i=1,\cdots,s$.

Let $x^\delta=\max_{\prec}\{\lpp(p_1{\mathcal
G}_1),\cdots,\lpp(p_s{\mathcal G}_s)\}$, so $\lpp({\mathcal
F})\preceq x^\delta$ always holds. Now consider all possible
strictly lower representations of ${\mathcal F}$ w.r.t. set
$B_{end}$. For each such expression, we get a possibly different
$x^\delta$. Since a term order is well-ordering, we can select a
strictly lower representation of ${\mathcal F}$ w.r.t. set
$B_{end}$ such that power product $x^\delta$ is minimal. Assume
this strictly lower representation is
\begin{equation}\label{eq_1}
\p({\mathcal F})=q_1\p({\mathcal H}_1)+\cdots+q_l\p({\mathcal
H}_l),
\end{equation}
where $q_i \in \K[X]$, ${\mathcal H}_i \in B_{end}$ and labeled
polynomial ${\mathcal F} \sul q_i{\mathcal H}_i$ for
$i=1,\cdots,l$. We will show that once this minimal $x^\delta$ is
chosen, we have $\lpp({\mathcal F})=x^\delta$ and hence the lemma
is proved. We prove this by contradiction.

Equality fails only when leading power product $\lpp({\mathcal
F})\prec x^\delta$. Denote $\m(i)=\lpp(q_i{\mathcal H}_i)$, and
then we can rewrite polynomial $\p({\mathcal F})$ in following
form:
$$\p({\mathcal F})=\sum\limits_{\m(i)=x^\delta}q_i\p({\mathcal
H}_i)+\sum\limits_{\m(i)\prec x^\delta}q_i\p({\mathcal H}_i)$$
\begin{equation}\label{eq_3}
=\sum\limits_{\m(i)=x^\delta}\lm(q_i)\p({\mathcal
H}_i)+\sum\limits_{\m(i)=x^\delta}(q_i-\lm(q_i))\p({\mathcal
H}_i)+\sum\limits_{\m(i)\prec x^\delta}q_i\p({\mathcal H}_i).
\end{equation}
The power products appearing in the second and third sums on the
second line all $\prec x^\delta$. Thus, the assumption
$\lpp({\mathcal F})\prec x^\delta$ means that power products in
the first sum also $\prec x^\delta$. So the first sum must be a
linear combination of s-polynomials, i.e.
\begin{equation}\label{eq_2}
\sum\limits_{\m(i)=x^\delta}\lm(q_i)\p({\mathcal
H}_i)=\sum\limits_{j,k}w_{jk}\sp({\mathcal H}_j, {\mathcal H}_k).
\end{equation}
where $w_{jk}$'s are monomials in $X$. For each s-polynomial
$\sp({\mathcal H}_j, {\mathcal H}_k)=u_{jk}{\mathcal
H}_j-v_{jk}{\mathcal H}_k$ in equation (\ref{eq_2}), we have
${\mathcal F}\sul w_{jk}u_{jk}{\mathcal H}_j$, because expression
(\ref{eq_1}) is a strictly lower representation of ${\mathcal F}$.

The next step is to use the hypothesis that all the lower critical
pairs of ${\mathcal F}$ have $t$-representations w.r.t. set
$B_{end}$. Therefore, for each s-polynomial $\sp({\mathcal H}_j,
{\mathcal H}_k)$ in equation (\ref{eq_2}), there exist polynomials
$g_1,\cdots,g_r \in \K[X]$ and labeled polynomials ${\mathcal
R}_1,\cdots,{\mathcal R}_r\in B_{end}$, such that
$$\sp({\mathcal H}_j, {\mathcal H}_k)=g_1\p({\mathcal
R}_1)+\cdots+g_r\p({\mathcal R}_r),$$ where s-polynomial
$\sp({\mathcal H}_j, {\mathcal H}_k) \unrhd g_i{\mathcal R}_i$ and
$\lcm(\lpp({\mathcal H}_j), \lpp({\mathcal H}_k))\succ
\lpp(g_i{\mathcal R}_i)$ for $i=1,\cdots,r$.

Substitute the above representations back into the equation
(\ref{eq_2}) and hence into the equation (\ref{eq_3}). The power
products in the new expression of (\ref{eq_3}) will all $\prec
x^\delta$. Then a new strictly lower representation of ${\mathcal
F}$ w.r.t. set $B_{end}$ appears with all power products $\prec
x^\delta$, which contradicts with the minimality of $x^\delta$. So
we must have $\lpp({\mathcal F})=x^\delta$.

Thus, there exist polynomials $q_1,\cdots,q_l \in \K[X]$ and
labeled polynomials ${\mathcal H}_1,\cdots,{\mathcal H}_l\in
B_{end}$, such that:
\begin{equation}\label{eq_5}
\p({\mathcal F})=q_1\p({\mathcal H}_1)+\cdots+q_l\p({\mathcal
H}_l),
\end{equation}
where ${\mathcal F} \sul q_i{\mathcal H}_i$ and leading power
product $\lpp({\mathcal F}) \succeq \lpp(q_i{\mathcal H}_i)$ for
$i=1,\cdots,l$. And this is already a $t$-representation of
${\mathcal F}$ w.r.t. set $B_{end}$ where $t=\lpp({\mathcal F})$.
Furthermore, since the equality holds in equation (\ref{eq_5}),
there exists an integer $j$ where $1\preceq j \preceq l$, such that
$\lpp({\mathcal F})=\lpp(q_j{\mathcal H}_j)$. The lemma is proved.
\end{proof}

The next two propositions provide sufficient conditions when a
critical pair has a $t$-representation.  Please pay more attention
to the position of ${\mathcal F}$ in the critical pair of each
proposition.

\begin{prop311}
Let $[{\mathcal F}, {\mathcal G}]=(u, {\mathcal F}, v, {\mathcal
G})$ be a critical pair, where ${\mathcal F}, {\mathcal G}\in
B_{end}$ are labeled polynomials and $u, v$ are monomials in $X$
such that $u\lm({\mathcal F})=v\lm({\mathcal
G})=\lcm(\lpp({\mathcal F}), \lpp({\mathcal G}))$. Then the
critical pair $[{\mathcal F}, {\mathcal G}]$ has a
$t$-representation w.r.t. set $B_{end}$, if
\begin{enumerate}
\item labeled polynomial $u{\mathcal F}$ has a strictly lower
representation w.r.t. set $B_{end}$, and

\item all the lower critical pairs of $[{\mathcal F}, {\mathcal
G}]$ have $t$-representations w.r.t. set $B_{end}$.
\end{enumerate}
\end{prop311}

\begin{proof}
Since labeled polynomial $u{\mathcal F}$ has a strictly lower
representation w.r.t. $B_{end}$, then there exist polynomials
$p_1,\cdots,p_s \in \K[X]$ and labeled polynomials ${\mathcal
H}_1,\cdots,{\mathcal H}_s\in B_{end}$, such that $$\p(u{\mathcal
F})=p_1\p({\mathcal H}_1)+\cdots+p_s\p({\mathcal H}_s),$$ where
labeled polynomial $u{\mathcal F} \sul p_i{\mathcal H}_i$ for
$i=1,\cdots,s$. By the definition of critical pairs, labeled
polynomial $u{\mathcal F}\sul v{\mathcal G}$. Then the following
equation holds:
$$\p(\sp({\mathcal F}, {\mathcal G})) = \p(u{\mathcal F}-v{\mathcal G})
= \p(u{\mathcal F})-\p(v{\mathcal G})$$ $$= p_1\p({\mathcal
H}_1)+\cdots+p_s\p({\mathcal H}_s)-v\p({\mathcal G}).$$ Denote
$p_{s+1}=-v$ and ${\mathcal H}_{s+1}={\mathcal G}\in B_{end}$. Then
$$\p(\sp({\mathcal F}, {\mathcal G}))=p_1\p({\mathcal
H}_1)+\cdots+p_s\p({\mathcal H}_s)+p_{s+1}\p({\mathcal H}_{s+1}),$$
where s-polynomial $\sp({\mathcal F}, {\mathcal G}) \bowtie
u{\mathcal F}\sul p_i{\mathcal H}_i$ for $i=1,\cdots,s+1$. Then this
is a strictly lower representation of $\sp({\mathcal F}, {\mathcal
G})$ w.r.t. set $B_{end}$. Combined with the hypothesis that all the
lower critical pairs of $[{\mathcal F}, {\mathcal G}]$ have
$t$-representations w.r.t. set $B_{end}$, Lemma \ref{lem-key} (key
lemma) shows the s-polynomial $\sp({\mathcal F}, {\mathcal G})$ has
a $t$-representation w.r.t. set $B_{end}$ where
$t=\lpp(\sp({\mathcal F}, {\mathcal G}))\prec \lcm(\lpp({\mathcal
F}),\lpp({\mathcal G}))$.
\end{proof}

\begin{prop312}
Let $[{\mathcal G}, {\mathcal F}]=(v, {\mathcal G}, u, {\mathcal
F})$ be a critical pair, where ${\mathcal G}, {\mathcal F}\in
B_{end}$ are labeled polynomials and $v, u$ are monomials in $X$
such that $v\lm({\mathcal G})=u\lm({\mathcal
F})=\lcm(\lpp({\mathcal G}), \lpp({\mathcal F}))$. Then the
critical pair $[{\mathcal G}, {\mathcal F}]$ has a
$t$-representation w.r.t. set $B_{end}$, if
\begin{enumerate}
\item labeled polynomial $u{\mathcal F}$ has a strictly lower
representation w.r.t. set $B_{end}$, and

\item all the lower critical pairs of $[{\mathcal G}, {\mathcal
F}]$ have $t$-representations w.r.t. set $B_{end}$.
\end{enumerate}
\end{prop312}

\begin{proof}
Since labeled polynomial $u{\mathcal F}$ has a strictly lower
representation w.r.t. $B_{end}$ and all the lower critical pairs of
$[{\mathcal G}, {\mathcal F}]$ have $t$-representations w.r.t. set
$B_{end}$, Lemma \ref{lem-key} (key lemma) shows that there exists a
labeled polynomial ${\mathcal H}\in B_{end}$ such that
$\lpp({\mathcal H})\mid \lpp(u{\mathcal F})$ and $u{\mathcal F}\sul
w{\mathcal H}$ where $w=\lm(u{\mathcal F})/\lm({\mathcal H})$.

Notice that $\lpp(v{\mathcal G})=\lpp(u{\mathcal
F})=\lpp(w{\mathcal H})$ and $v{\mathcal G}\sul u{\mathcal F} \sul
w{\mathcal H}$, then
$$\sp({\mathcal G}, {\mathcal F})=v{\mathcal G}-u{\mathcal
F}=(v{\mathcal G}-w{\mathcal H})-(u{\mathcal F}-w{\mathcal H})$$
$$=\gcd(v,w)\sp({\mathcal G},{\mathcal H})-\gcd(u, w)\sp({\mathcal F}, {\mathcal H}).$$
Since critical pair $[{\mathcal G}, {\mathcal F}] \sul [{\mathcal
G}, {\mathcal H}]$ and $[{\mathcal G}, {\mathcal F}] \sul [{\mathcal
F}, {\mathcal H}]$ and all the lower critical pairs of $[{\mathcal
G}, {\mathcal F}]$ have $t$-representations w.r.t. set $B_{end}$,
then the s-polynomial $\sp({\mathcal G}, {\mathcal H})$ has a
$t$-representation w.r.t. set $B_{end}$ where $t\prec
\lcm(\lpp({\mathcal G}),\lpp({\mathcal H}))$, and similarly the
s-polynomial $\sp({\mathcal F},{\mathcal H})$ also has a
$t$-representation w.r.t. set $B_{end}$ where $t\prec
\lcm(\lpp({\mathcal F}), \lpp({\mathcal H}))$.

Combined with the fact that $\lcm(\lpp({\mathcal G}), \lpp({\mathcal
F})) = \gcd(v,w) \lcm(\lpp({\mathcal G}), \lpp({\mathcal H})) =
\gcd(u,w)\lcm(\lpp({\mathcal F}),\lpp({\mathcal H}))$, thus the
s-polynomial $\sp({\mathcal G}, {\mathcal F})$ has a
$t$-representation w.r.t. set $B_{end}$ where $t\prec
\lcm(\lpp({\mathcal G}), \lpp({\mathcal H}))$.
\end{proof}

\section{Available Variation of F5 Algorithm} \label{sec-av}

\subsection{Available Variations}

Briefly, the F5 (F5B) algorithm introduces a special reduction
(F5-reduction) and provides two new criteria (Syzygy Criterion and
Rewritten Criterion) to avoid unnecessary computations/reductions.

From the proofs in last section, Lemma \ref{lem-key} (key lemma)
plays a crucial role in the whole proofs, and the base of this key
Lemma is the property of F5-reduction (Proposition
\ref{prop-reduce}). So {\em the F5-reduction is the key of whole F5
(F5B) algorithm, and it ensures the correctness of the whole
algorithm.}

Therefore, various variations of F5 algorithm become available if
we maintain the F5-reduction. For example,
\begin{enumerate}

\item use various strategies of selecting critical pairs, such as
incremental F5 algorithm in \citep{Fau02} and the F5 algorithm
(reported by Faug\`ere in INSCRYPT 2008);

\item use matrix technique when doing reduction, such as {\em
matrix}-F5 algorithm mentioned in \citep{Bardet03};

\item add new initial polynomials during computation, such as
branch \gr basis algorithm over boolean ring \citep{SunWang09a,
SunWang09b};

\item change the order of signatures, such as \gr basis algorithms
in \citep{Ars09, SunWang09a, SunWang09b}.
\end{enumerate}

Next, we introduce a natural variation of F5 algorithm by change the
order of signatures. This natural variation has been reported in
\citep{SunWang09a, SunWang09b}, and it is also quite similar as the
variation in \citep{Ars09}.

\subsection{A Natural Variation}

In fact, the original F5 algorithm is always an incremental
algorithm no matter which strategy of selecting critical pair is
used. Specifically, the outputs of F5 algorithm not only contain the
\gr basis of the ideal $\langle f_1, \cdots, f_m \rangle$, but also
include the \gr bases of the ideals $\langle f_i, \cdots, f_m
\rangle$ for $1<i<m$.

However, there are three disadvantages of incremental algorithms.
\begin{enumerate}

\item Generally, the ideals $\langle f_i, \cdots, f_m \rangle$ for
$1<i<m$ usually have higher dimensions than the ideal $\langle f_1,
\cdots, f_m \rangle$, so their \gr bases may be expensive to
compute.

\item The \gr bases of ideals $\langle f_i, \cdots, f_m \rangle$
for $1<i<m$ are not necessary, since the \gr of ideal $\langle
f_1, \cdots, f_m \rangle$ is what we really need.

\item The order of initial polynomials influences the efficiency
of algorithm significantly.
\end{enumerate}

If we dig it deeper, we will find that {\em it is the order of
signatures that makes F5 algorithm incremental}. Original F5
algorithm uses a POT (position over term) order of signatures
defined on free module $(\K[X])^m$. Thus, a nature idea is to change
the POT order to the TOP (term over position) order. When using a
TOP order of signatures, F5 algorithm will not be an incremental
algorithm.

We extend the admissible order $\prec$ on $PP(X)$ to free module
$(\K[X])^m$ in the TOP (term over position) fashion:
$$x^\alpha\eb_i \prec' x^\beta\eb_j \mbox{ (or } x^\beta\eb_j \succ' x^\alpha\eb_i)
\ \ \mbox{  iff  } \left\{\begin{array}{l} x^\alpha\lpp(f_i) \prec
x^\beta \lpp(f_j), \\ \mbox{ or }
\\ x^\alpha\lpp(f_i)=x^\beta\lpp(f_j) \mbox{ and } i > j.
\end{array}\right.
$$

Similarly, labeled polynomials are compared in the following way:

$$(x^\alpha\eb_i, f, k_f) \prl' (x^\beta\eb_j, g, k_g)
\mbox{ (or }(x^\beta\eb_j, g, k_g) \sul' (x^\alpha\eb_i, f, k_f))\
\
\mbox{  iff  } \left\{\begin{array}{l} x^\alpha\eb_i \prec' x^\beta\eb_j, \\
\mbox{ or } \\ x^\alpha\eb_i = x^\beta\eb_j \mbox{ and } k_f >
k_g.
\end{array}\right.
$$
Particularly, denote $(x^\alpha\eb_i, f, k_f) \bowtie'
(x^\beta\eb_j, g, k_g)$, if $x^\alpha\eb_i = x^\beta\eb_j$ and
$k_f=k_g$.

There is no need to modify the definition of rewritable, as well
as the descriptions of F5-reduction, Syzygy Criterion and
Rewritten Criterion. However, the definition of comparable needs a
bit adaption to fit the new order.

\begin{define}[new-comparable] \label{df-newcomparable}
Let ${\mathcal F}=(x^\alpha\eb_i, f, k_f)\in L[X]$ be a labeled
polynomial, $cx^\gamma$ a non-zero monomial in $X$ and $B\subset
L[X]$ a set of labeled polynomials. The labeled polynomial
$cx^\gamma{\mathcal F}=(x^{\gamma+\alpha}\eb_i, cx^\gamma f, k_f)$
is said to be new-comparable by $B$, if there exists a labeled
polynomial ${\mathcal G}=(x^\beta\eb_j, g, k_g)\in B$ such that:
\begin{enumerate}

\item $\lpp(g) \mid x^{\gamma+\alpha}$, and

\item $cx^\gamma{\mathcal F} \sul' x^\lambda\lpp(f_i){\mathcal
G}$,  where $x^\lambda = x^{\gamma+\alpha}/\lpp(g)$.
\end{enumerate}
\end{define}

With this definition, the following proposition implies the new
Syzygy Criterion is still correct.

\begin{prop}[new-comparable]
Let ${\mathcal F}=(x^\alpha\eb_j, f, k_f)\in B_{end}$ be a labeled
polynomial and $cx^\gamma$ a non-zero monomial in $X$. If labeled
polynomial $cx^\gamma{\mathcal F}$ is {\bf new-comparable} by
$B_{end}$, then $cx^\gamma{\mathcal F}$ has a strictly lower
representation w.r.t. set $B_{end}$.
\end{prop}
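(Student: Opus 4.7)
The plan is to adapt the proof of Proposition 4.9 (comparable) to the TOP setting, using the same syzygy trick $g f_j - f_j g = 0$ but with the inequality witnesses adjusted to the new order $\prl'$. Concretely, let ${\mathcal G}=(x^\beta \eb_l, g, k_g)\in B_{end}$ be the labeled polynomial certifying that $cx^\gamma{\mathcal F}$ is new-comparable, so $\lpp(g)\mid x^{\gamma+\alpha}$ and $cx^\gamma{\mathcal F} \sul' x^\lambda\lpp(f_j){\mathcal G}$, where $x^\lambda = x^{\gamma+\alpha}/\lpp(g)$. Let ${\mathcal F}_j=(\eb_j,f_j,j)\in B_{end}$ be the initial labeled polynomial. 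Then, exactly as in the earlier proof,
$$x^{\gamma+\alpha}f_j \;=\; \frac{x^\lambda}{\lc(g)}f_j\cdot g \;-\; \frac{x^\lambda}{\lc(g)}(g-\lm(g))\cdot f_j \;=\; q_1\p({\mathcal G}) + q_2\p({\mathcal F}_j),$$
with $q_1=(x^\lambda/\lc(g))f_j$ and $q_2=-(x^\lambda/\lc(g))(g-\lm(g))$.

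Next I would verify the two strictly lower inequalities under $\sul'$. For the first, $\s(q_1{\mathcal G}) = x^\lambda\lpp(f_j)\,x^\beta\eb_l = \s(x^\lambda\lpp(f_j){\mathcal G})$ and their numbers agree, so $q_1{\mathcal G}\bowtie' x^\lambda\lpp(f_j){\mathcal G}$; combined with the new-comparable hypothesis this gives $cx^\gamma{\mathcal F}\sul' q_1{\mathcal G}$. For the second, note that $\s(q_2{\mathcal F}_j)=x^\lambda\lpp(g-\lm(g))\eb_j$, and in the TOP order we compare $x^{\gamma+\alpha}\lpp(f_j)$ against $x^\lambda\lpp(g-\lm(g))\lpp(f_j)$. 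Since $x^{\gamma+\alpha}=x^\lambda\lpp(g)$ and $\lpp(g-\lm(g))\prec\lpp(g)$, the former strictly dominates, so $\s(cx^\gamma{\mathcal F})\suss \s(q_2{\mathcal F}_j)$, and hence $cx^\gamma{\mathcal F}\sul' q_2{\mathcal F}_j$.

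Finally, I would invoke the TOP-analogue of Corollary 4.5 (signature) to write
$$\p(cx^\gamma{\mathcal F}) \;=\; \bar c\,x^{\gamma+\alpha}f_j + p_1\p({\mathcal H}_1)+\cdots+p_s\p({\mathcal H}_s),$$
where each $cx^\gamma{\mathcal F}\sul' p_i{\mathcal H}_i$, and substitute the syzygy expression for $x^{\gamma+\alpha}f_j$; the result is a strictly lower representation of $cx^\gamma{\mathcal F}$ w.r.t.\ $B_{end}$. The main obstacle, and the only genuinely new piece of work, is justifying the TOP-version of Lemma 4.4 / Corollary 4.5: its original proof uses that a rewritten $u_1{\mathcal Q}_1$ cannot be rewritable to conclude $\s(u_1{\mathcal Q}_1)\sus \s(u_2{\mathcal Q}_2)$, and that inductive step must be re-examined after replacing $\prec$ with $\prec'$ throughout. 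Once this signature lemma is in hand for the new order, the computation above goes through verbatim and the proposition follows.
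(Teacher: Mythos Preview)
Your proposal is correct and follows essentially the same route as the paper: the same syzygy identity $gf_j-f_jg=0$ yielding $x^{\gamma+\alpha}f_j=q_1\p({\mathcal G})+q_2\p({\mathcal F}_j)$, the same two $\sul'$ checks for $q_1{\mathcal G}$ and $q_2{\mathcal F}_j$, and the same appeal to Corollary~4.15 (signature) followed by substitution. The one difference is that the paper simply invokes the signature corollary in the TOP setting without comment, while you flag it as requiring re-examination. Your caution is reasonable, but the check is routine: the only order-sensitive step in Lemma~4.14's induction is the inference $\s(u_1{\mathcal Q}_1)\succ'\s(u_2{\mathcal Q}_2)$ from non-rewritability, and since rewritability depends only on signature divisibility and numbers (not on the module order), the argument that equal signatures would force $u_1{\mathcal Q}_1$ to be rewritable by ${\mathcal Q}_2$ goes through unchanged, after which $u_1{\mathcal Q}_1\sul' u_2{\mathcal Q}_2$ gives the strict inequality in $\prec'$.
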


\begin{proof}
As $cx^\gamma {\mathcal F}=(x^{\gamma+\alpha}\eb_j, cx^\gamma f,
k_f)$ is new-comparable by $B_{end}$, there exists labeled
polynomial ${\mathcal G}=(x^\beta\eb_l, g, k_g)\in B_{end}$ such
that (1) $\lpp(g)\mid x^{\gamma+\alpha}$ and (2) $cx^\gamma{\mathcal
F} \sul' x^\lambda\lpp(f_i){\mathcal G}$, where $x^\lambda =
x^{\gamma+\alpha}/\lpp(g)$. Let ${\mathcal F}_j=(\eb_j, f_j, j)\in
B_{end}$ be the labeled polynomial of the initial polynomial $f_j$.
Then the polynomial 2-tuple $(g, -f_j)$ is still a principle syzygy
of the 2-tuple vector $(f_j, g)$ in free module $(\K[X])^2$. So
$$gf_j-f_jg = 0 \mbox{ and } \lm(g)f_j=f_jg-(g-\lm(g))f_j.$$ Since
$x^{\gamma+\alpha}=x^\lambda\lpp(g)$, then
\begin{equation}\label{eq_15}
x^{\gamma+\alpha} f_j=x^\lambda \lpp(g)
f_j=\frac{x^\lambda}{\lc(g)} (f_jg-(g-\lm(g))f_j) =
\frac{x^\lambda}{\lc(g)}
f_jg-\frac{x^\lambda}{\lc(g)}(g-\lm(g))f_j$$ $$ = q_1 g+ q_2 f_j=
q_1\p({\mathcal G})+q_2\p({\mathcal F}_j),
\end{equation}
where $q_1=\frac{x^\lambda}{\lc(g)} f_j$ and
$q_2=-\frac{x^\lambda}{\lc(g)} (g-\lm(g))$.

By the definition of new-comparable, labeled polynomial $cx^\gamma
{\mathcal F}\sull x^\lambda\lpp(f_i){\mathcal G} =
\lpp(q_1){\mathcal G}$. Since
$x^{\gamma+\alpha}\lpp(f_j)=x^\lambda\lpp(g)\lpp(f_j) \succ
x^\lambda\lpp(g-\lm(g))\lpp(f_j) = \lpp(q_2)\lpp(f_j)$, then
$cx^\gamma {\mathcal F}\sull q_2{\mathcal F}_j$ holds.

Since labeled polynomial ${\mathcal F}\in B_{end}$ and $cx^\gamma$
is a non-zero monomial, Corollary \ref{cor-signature} (signature)
shows
\begin{equation}\label{eq_16}
\p(cx^\gamma {\mathcal F})=cx^\gamma f=\bar{c}x^{\gamma+\alpha}
f_j+p_1\p({\mathcal H}_1)+\cdots+p_s\p({\mathcal H}_s),
\end{equation}
where $\bar{c}$ is a non-zero constant in  $\K$, $p_i\in \K[X]$
and ${\mathcal H}_i\in B_{end}$ such that either $p_i=0$ or
signature $\s(cx^\gamma{\mathcal F}) \sus' \s(p_i{\mathcal H}_i)$
and hence labeled polynomial $cx^\gamma {\mathcal F} \sul'
p_i{\mathcal H}_i$ for $i=1,\cdots,s$.

Substitute the expression of polynomial $x^{\gamma+\alpha} f_j$ in
equation (\ref{eq_15}) into (\ref{eq_16}). Then
$$\p(cx^\gamma {\mathcal F})=\bar{c} q_1\p({\mathcal
G})+\bar{c} q_2\p({\mathcal F}_j)+p_1\p({\mathcal
H}_1)+\cdots+p_s\p({\mathcal H}_s),$$ where labeled polynomial
$cx^\gamma {\mathcal F} \sul' \bar{c}q_1{\mathcal G}$, $cx^\gamma
{\mathcal F} \sul' \bar{c}q_2{\mathcal F}_j$ and $cx^\gamma
{\mathcal F} \sul' p_i{\mathcal H}_i$ for $i=1,\cdots,s$. This is
already a strictly lower representation of the labeled polynomial
$cx^\gamma {\mathcal F}$ w.r.t. set $B_{end}$.
\end{proof}

\begin{remark}
For the labeled polynomials ${\mathcal F}=(x^\alpha\eb_i, f, k_f)\in
L[X]$ and ${\mathcal G}=(x^\beta\eb_j, g, k_g)\in B$ in Definition
\ref{df-newcomparable} (new-comparable). The second condition
``$cx^\gamma{\mathcal F} \sul' x^\lambda\lpp(f_i){\mathcal G}$" is
in fact equivalent to the condition ``signature $\s({\mathcal
G})=\eb_j$ and $\eb_i\succ \eb_j$, i.e. $i < j$". So this new Syzygy
Criterion only utilizes the principle syzygies of initial
polynomials, which is the same as the criteria in \citep{Ars09}. The
technique ``adding new initial polynomials during computation"
introduced in \citep{SunWang09a, SunWang09b} will enhance this new
Syzygy Criterion. Specifically, when a labeled polynomial ${\mathcal
P}=(x^\gamma\eb_l, p, k_p)$ is generated during the computation,
simply adding the labeled polynomial ${\mathcal P}'=(\eb_{l'}, p,
k'_p)$ into computation and updating critical pairs correspondingly
do not affect the correctness of algorithm, where we prefer $l'>l$
and $k'_p>k_p$ such that ${\mathcal P}\sul {\mathcal P}'$.
\end{remark}

The Syzygy Criterion in Ars and Hashema's paper \citep{Ars09} can
also be proved in a similar way as above.

\subsection{Criteria of the Natural Variation}

Although only the principle syzygies of initial polynomials are
used, the new Syzygy Criterion also performs pretty good in
experiments. We have implemented this natural variation of F5
algorithm over boolean ring \citep{SunWang09a, SunWang09b}. The data
structure ZDD (Zero-suppressed Binary Decision Diagrams) is used to
express boolean polynomials, and the ``adding new initial
polynomials during computation" technique is also used to enhance
the new Syzygy Criterion. Also matrix technique is used when
F5-reducing labeled polynomials, but this procedure is not fully
optimized yet, as only general Gaussian elimination is used.

The data about the two revised criteria in following table are
obtained from the above implementation. Examples are randomly
generated quadratic boolean polynomials, and the number of initial
polynomials $m$ equals to the number of variables $n$. The timings
are obtained from a computer (OS Linux, CPU Xion 4*3.0GHz, 16.0GB
RAM). In the table \ref{tab-criteria}, {\em comparable,
F2-comparable} \footnote{F2-comparable is a special {\em comparable}
which results from the characteristic of boolean ring, since for
each boolean polynomial $f$, we always have $f^2=f$ in boolean ring.
For more details, please see \citep{SunWang09a, SunWang09b}.} and
{\em rewritable} refer to the times of corresponding conditions
being met in the computation. Remark that these numbers are not the
numbers of rejected critical pairs, as F5-reduction also needs to
check the {\em comparable, F2-comparable} and {\em rewritable}.
Besides, {\em useful cp's} is the number of critical pairs that are
really operated during computation (i.e. not rejected by two
criteria). {\em 0-polys} is the number of labeled polynomials that
F5-reduce to $0$.

\begin{table}[!ht]\centering
\caption{The Revised Criteria} \label{tab-criteria}
\begin{tabular}{ccccccccc}
\hline\noalign{\smallskip}
$m=n$ & 6 & 8 & 10 & 12 & 14 & 16 & 18 & 20  \\
\noalign{\smallskip}\hline\noalign{\smallskip}
   comparable    & 0 & 36 & 566 & 898 & 72189 & 68337 & 99058 & 136404 \\
   F2-comparable & 2 & 5 & 87 & 114 & 7770 & 6763 & 9374 & 11749 \\
   rewritable    & 2 & 20 & 74 & 136 & 6908 & 4786 & 6293 & 8536 \\
   useful cp's      & 21 & 77 & 225 & 305 & 841 & 3480 & 4469 & 5672 \\
   0-polys       & 0 & 0 & 0 & 0 & 0 & 0 & 0 & 0 \\
   Time(sec.)    & 0.001 & 0.005 & 0.034 & 0.107 & 0.778 & 14.586 & 77.197 & 344.875\\
\noalign{\smallskip}\hline
\end{tabular}
\end{table}

From the data in table \ref{tab-criteria}, most of redundant
computations/reductions are rejected by the revised {\em
new-comparable} (Syzygy Criterion), particularly in large examples
and no labeled polynomials F5-reduce to $0$ in these examples.
Therefore, the revised criteria in the natural variation of F5
algorithm are very effective and they are able to reject almost all
unnecessary computations/reductions.

\section{Conclusion}

In this paper, a complete proof for the correctness of F5 (F5-like)
algorithm is presented. As F5B algorithm is equivalent to the
original F5 algorithm as well as some F5-like algorithms, we
concentrate on the proof for the correctness of F5B algorithm. This
new proposed proof is not limited to homogeneous systems and does
not depend on the strategies of selecting critical pairs, so it can
easily extends to other variations of F5 algorithm. From the new
proof, we find that the F5-reduction is the key of the whole
algorithm and it ensures the correctness of two criteria. With these
insights, various variations of F5 algorithm become available by
maintaining the F5-reduction. We present and prove a natural
variation of F5 algorithm which is not incremental. We hope to study
other variations of F5 algorithm in the future.

\section{Acknowledgements}

We would like to thank Professor Xiaoshan Gao, Professor Deepak
Kapur, Professor Shuhong Gao and Christian Eder for their
constructive suggestions.

\end{document}